\newtheorem{thm}{Theorem}[section]
\newtheorem{lemma}[thm]{Lemma}
\newtheorem{conj}[thm]{Conjecture}
\newtheorem{cor}[thm]{Corollary}
\newtheorem{example}[thm]{Example}
\newcommand{\ty}{{\tilde y}}
\newcommand{\C}{{\mathbb C}}
\newcommand{\Z}{{\mathbb Z}}
\newcommand{\al}{{\alpha}}
\numberwithin{equation}{section}
\begin{document}
\title[Discrete non-commutative integrability]{Discrete non-commutative integrability: the proof of a conjecture by M. Kontsevich}

\author{Philippe Di Francesco} 
\address{PDF: Institut de Physique Th\'eorique du Commissariat \`a l'Energie Atomique, 
Unit\'e de Recherche associ\'ee du CNRS,
CEA Saclay/IPhT/Bat 774, F-91191 Gif sur Yvette Cedex, 
FRANCE. e-mail: philippe.di-francesco@cea.fr}

\author{Rinat Kedem}
\address{RK: Department of Mathematics, University of Illinois Urbana, IL 61801, 
U.S.A. e-mail: rinat@illinois.edu} 
\date{\today}
\begin{abstract}
We prove a conjecture of Kontsevich regarding the solutions of rank two recursion 
relations for non-commutative variables which, in the commutative case, reduce to 
rank two cluster algebras of affine type. 
The conjecture states that solutions are positive Laurent polynomials in the initial 
cluster variables. We prove this by use of a non-commutative version of
the path models which we used for the commutative case. 
\end{abstract}

\maketitle

\section{Introduction}

Let $\mathbb F=\C(x,y)$ denote the skew field of rational functions in the 
non-commutative variables $x$ and $y$. 
Given any $a\in\Z$, Kontsevich introduced the following transformation on $\mathbb F^2$:
\begin{equation}\label{taop}
T_a\ : \ \begin{pmatrix} x\\ y \end{pmatrix} \ \mapsto \ 
\begin{pmatrix} x y x^{-1}\\ (1+y^a)x^{-1}. \end{pmatrix}
\end{equation}
which preserves the commutator $C=xyx^{-1}y^{-1}$.

Let ${\mathcal A}\subset \mathbb F$ be the algebra generated by the entries of all 
vectors obtained from iterations of the map $T_c T_b$ (with $c,b\in\Z$), 
acting on the vector 
$\begin{pmatrix} x\\y \end{pmatrix}$. Kontsevich made the following conjecture:
\begin{conj}\cite{Kont}\label{main}
For any $b,c\in \Z_{>0}$, the entries of the vector 
$(T_cT_b)^m\begin{pmatrix} x\\ y \end{pmatrix}$, for all
$m\geq 0$, are non-commutative Laurent polynomials in $x$ and $y$ 
with non-negative integer coefficients. 
\end{conj}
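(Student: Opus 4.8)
The plan is to follow the strategy of the commutative case, using the conserved commutator $C=xyx^{-1}y^{-1}$ as a bookkeeping device for the non-commutativity. \textbf{Step 1 (reduction to a scalar recursion).} First I would show that the two–component dynamics of $T_cT_b$ is encoded in a single bi-infinite sequence $(\xi_n)_{n\in\Z}$ in $\mathbb F$ with $\xi_0=x$, $\xi_1=y$, obeying a non-commutative recursion of the form
\begin{equation}\label{ncrec}
\xi_{n+1}=(1+\xi_n^{\,a_n})\,\xi_{n-1}^{-1},\qquad a_n=b\ \text{for $n$ even},\quad a_n=c\ \text{for $n$ odd},
\end{equation}
in the precise sense that the two entries of $(T_cT_b)^m\begin{pmatrix}x\\ y\end{pmatrix}$ equal, up to conjugation by an explicit monomial in $x^{\pm1},y^{\pm1}$, the terms $\xi_{2m}$ and $\xi_{2m+1}$. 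Since conjugation by such a monomial, and multiplication by it, preserve both the Laurent property and the positivity of coefficients, it then suffices to prove that every $\xi_n$ with $n\ge 0$ is a positive non-commutative Laurent polynomial in $x$ and $y$.

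\textbf{Step 2 (commutation relations from the conserved quantity).} Using that $T_a$ preserves $C$ (a one-line check, since $1+y^a$ commutes with $y$), I would record the commutation relations among consecutive terms of $(\xi_n)$: each pair $(\xi_n,\xi_{n+1})$ has commutator a fixed conjugate of $C^{\pm1}$, so that $\xi_n\xi_{n+1}$ and $\xi_{n+1}\xi_n$, and likewise $\xi_n$ against $\xi_{n+1}^{-1}$, differ by the insertion of a fixed monomial. Because $C$ and $C^{-1}$ are themselves positive Laurent monomials in $x,y$, these relations may be applied freely to reorder words without disturbing Laurent-positivity; this is precisely where the ``non-commutative integrability'' of the title enters the positivity argument.

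\textbf{Step 3 (the non-commutative path model and verification of the recursion).} Mirroring the commutative construction, I would build a transfer-matrix/path model: a weighted graph $\mathcal G$ — essentially the one attached to the rank-two data $(b,c)$ as in the commutative case — together with, for each $n\ge 0$, a finite set $\mathcal P_n$ of paths on $\mathcal G$, such that $\xi_n = M_n\sum_{p\in\mathcal P_n} w(p)$, where $M_n$ is an explicit monomial in $x,y$ and $w(p)$ is the product of the edge-weights of $p$ \emph{taken in path order}, each edge-weight being a positive Laurent monomial in $x,y$ (dressed, where needed, by powers of $C$), so that the non-commutativity is localized entirely in these orderings. The main step is to prove, by induction on $n$ starting from the trivial paths giving $\xi_0=x$ and $\xi_1=y$, that the right-hand side of \eqref{ncrec} computed from the path expressions for $\xi_{n-1}$ and $\xi_n$ is again of path-model form, with $\mathcal P_{n+1}$ obtained from $\mathcal P_{n-1}$ and $\mathcal P_n$ by an explicit local ``mutation'' of paths. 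In the commutative case this is a generating-function identity; here one must additionally check that every word-rearrangement needed to bring the product into path-ordered form is a consequence of the relations of Step 2.

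\textbf{Conclusion and main obstacle.} Once the path formula is established, each summand $M_n w(p)$ normal-orders, via the positive monomial relations of Step 2, to a single positive non-commutative Laurent monomial in $x$ and $y$; summing over the finite set $\mathcal P_n$ with coefficient $+1$ shows that $\xi_n$, hence by Step 1 the entries of $(T_cT_b)^m\begin{pmatrix}x\\ y\end{pmatrix}$, lies in $\N\langle x^{\pm1},y^{\pm1}\rangle$. I expect Step 3 to be the crux: the delicate point is to choose $\mathcal G$ and the edge-weights so that (i) the model reproduces the initial data and (ii) one step of \eqref{ncrec} corresponds to a purely local move on paths whose correctness follows \emph{solely} from the $C$-commutation relations — i.e. so that the powers of $C$ absorbed into the weights are exactly those required for no non-monomial terms to survive. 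The periodic finite-type cases $bc\le 3$ provide a sanity check and the affine case $bc=4$ the principal model, but the construction should be uniform in $(b,c)$.
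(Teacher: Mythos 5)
There is a genuine gap, and it is located exactly where you place your hopes: Step 3, together with your closing claim that ``the construction should be uniform in $(b,c)$.'' No such uniform construction is known, and the paper does not attempt one: it proves the conjecture only in the integrable cases $bc=4$ (the finite-type cases $bc\le 3$ being periodic). The ingredient that makes those cases work is absent from your outline: a \emph{second} conserved quantity $K$, a Laurent polynomial in $(R_n,R_{n+1})$ invariant under the mutation (e.g.\ $K=R_{n+1}^{-1}R_n+R_{n+1}^{-1}R_n^{-1}+R_{n+1}R_n^{-1}$ for $(b,c)=(2,2)$). Conservation of $K$ yields a \emph{linear} recursion with constant coefficients, $R_{n+1}C+R_{n-1}=R_nK$ and $R_{n+1}+CR_{n-1}=KR_n$, whence the generating function equals $(1-tK+t^2C)^{-1}(\cdots)$ and factors as a \emph{finite} continued fraction, i.e.\ a transfer matrix on a finite graph with positive Laurent-polynomial weights. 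Your Step 2 records only the relations coming from $C$; these alone do not produce a finite graph $\mathcal G$, and your inductive ``local mutation of paths'' has no mechanism to close up for $bc>4$, where the system is not integrable and no analogue of $K$ exists. So the argument, as sketched, cannot deliver the general statement, and even for $bc=4$ the decisive computation (finding $K$ and deriving the linear recursion) is missing.

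Two further points you would need even in the affine cases. First, for $(b,c)=(1,4)$ the path model naturally produces only the even-index variables $u_n=R_{2n}$; the odd ones are recovered as $R_{2n+1}=u_nu_{n+1}-C^{-1}$, a \emph{subtraction}, so positivity requires a separate lemma showing that the monomial $C^{-1}$ actually occurs in the Laurent expansion of $u_nu_{n+1}$ (exhibited in the paper via an explicit pair of paths). Second, because the $(1,4)$ system is not invariant under shifting the seed by one, one must run the path model for two distinct seeds $(x,y)=(CR_0,R_1)$ and $(X,Y)=(CR_1,R_2)$, and then use the translation symmetry $f^{(c,b)}_n=g^{(b,c)}_{n+1}$ together with an explicit positivity-preserving anti-automorphism $*$ to cover $(b,c)=(4,1)$ and all $n<0$. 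Also note that in these cases the edge weights $y_i$ are positive Laurent \emph{polynomials}, not monomials, so your claim that each path normal-orders to a single monomial holds only for $(2,2)$.
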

That is, the generators of $\mathcal A$ are positive Laurent polynomials. 

This conjecture is analogous to the similar conjecture  \cite{FZ} for rank 2 cluster algebras, 
in the commutative limit, $C=1$.
Although it is not quite clear what a ``good'' definition of a noncommutative cluster 
algebra should be in general, the equation \eqref{taop} can be thought of an example 
of the mutations of the cluster variables in a rank 2 non-commutative cluster algebra.

In the commutative case \cite{SZ}, we introduced a method \cite{cluster4} which guarantees 
Laurent positivity, valid for the integrable cases of rank 2 cluster algebras, 
corresponding to a rank 2 affine Cartan matrix (that is, $b c = 4$).
This was done by writing explicit expressions for the cluster variables in terms 
of path models on weighted graphs. Equivalently, the generating function for 
cluster variables is a finite continued fraction with a manifestly positive expansion.  

The case of affine $A_1$ is also the simplest example of a $Q$-system cluster algebra 
\cite{Kedem}. The path formulation can be generalized to the higher rank $Q$-systems 
\cite{cluster3,cluster4}. 
A non-commutative version of the $Q$-system cluster algebra is given by 
the so-called $T$-system. In \cite{cluster5}, we found the solutions of the $A_r$ $T$-system 
using the same path models, but with non-commutative weights. In this case, the 
non-commutative mutation relations are such that their matrix elements are the 
$T$-system equations, and the matrix elements of the non-commutative cluster variables 
are  the $T$-system solutions. This is therefore another candidate for a non-commutative 
cluster algebra, of higher rank.

At rank 2, the Kontsevich evolution and the non-commutative $Q$-system relations are 
candidates for non-commutative cluster algebra mutations.
Both can actually be obtained as different specializations 
of a more general evolution equation. 
We remark that the non-commutative $Q$-system 
is distinct from the quantum cluster algebras defined by \cite{BZ}, which is obtained as a 
specialization of the Kontsevich evolution, by setting $C=q$ to be a central element. 

In this paper, we prove the conjecture of Kontsevich in the cases where it generalizes 
the integrable rank 2 cluster algebras, that is, the values of $b,c$ are obtained from 
an affine Cartan matrix. We again use the path models introduced in \cite{cluster4} 
with non-commutative weights.

\noindent{\bf Acknowledgements:} We thank M. Kontsevich for explaining his 
conjectures to us. The research of P.D.F. is supported in part by the 
ANR Grant GranMa, the
ENIGMA research training network MRTN-CT-2004-5652,
and the ESF program MISGAM. 
R.K. is supported by NSF grant DMS-0802511. 
This research was hosted by the Mathemathisches 
Forschungsingstitut Oberwolfach and by the IPhT at CEA/Saclay. 
We thank these institutes for their support.

\section{Preliminaries}
\subsection{Path models for the affine rank 2 cluster algebras}
Let us review briefly our solution for the rank 2 affine cluster algebras in the 
commutative case \cite{FZ}. For the full details we refer to \cite{cluster4}.

Let $\mathcal F$ be the field of rational functions in the two commuting formal 
variables $x,y$ with rational coefficients. We consider the subring of $\mathcal F$ 
generated by the variables $R_n$, where $R_n$ satisfy the recursion relations 
\begin{equation}\label{commutativeexchange}
R_{n+1}R_{n-1} = \left\{ \begin{array}{ll} 1 + R_n^b, & n \hbox{ odd};\\
1 + R_n^c, & n \hbox{ even},\end{array}\right.
\end{equation}
with initial conditions $(R_0,R_1)=(x,y)$. This is the cluster algebra of rank 2 
corresponding to the exchange matrix 
\begin{equation}\label{Bmat}
B=\begin{pmatrix} 0 & b \\ -c & 0\end{pmatrix}.\end{equation}
with $b,c>0$. 

Some of these cluster algebras were studied in \cite{SZ,CZ}, where, in particular,  
Laurent positivity was proven for the cases $bc\leq 4$ and $b=c$ in general. 
We are interested here in the {\em integrable} cases of the discrete evolution 
equations \eqref{commutativeexchange}, that is, 
 the following cases: (i) $(b,c)=(2,2)$, (ii) $(b,c)=(1,4)$ and (iii) $(b,c) = (4,1)$. 
Each of these cases corresponds to an affine Dynkin diagram of rank 2, and 
each is integrable. These cases were studied in \cite{SZ}.

We may write a map $T_a:{\mathcal F}^2\to {\mathcal F}^2$ as
$$
T_a \begin{pmatrix} R_n \\ R_{n+1} \end{pmatrix} = 
\begin{pmatrix} R_{n+1} \\ R_{n+2}\end{pmatrix}.
$$
Then for th compound transformation $\mu=T_c T_b$ we have
$\mu^m (R_0,R_1)=(R_{2m},R_{2m+1})$, so there is a translational property, 
$\mu^m(R_{2n},R_{2n+1})=(R_{2(m+n)},R_{2(m+n)+1})$. Therefore to get an 
expression for  $R_n$ in terms of any initial data $(R_m,R_{m+1})$, we need 
only find it in terms of two sets of initial data, say, $(R_0,R_1)$ and $(R_1,R_2)$. 
If $b=c$ we need only consider the first set of data because of the additional symmetry.

The path model solution of the system goes as follows. 
Define the generating function for the cluster variables as follows:
\begin{equation}
F(t) = \left\{ \begin{array}{ll} \sum_{n\geq 0} t^n R_n & \hbox{case (i)};\\
\sum_{n\geq 0} t^n R_{2n} & \hbox{case (ii)};\\
\sum_{n\geq 0} t^n R_{2n+1} & \hbox{case (iii)}.\end{array}\right.
\end{equation}
We write explicit expressions for $F(t)$ in each case, which make manifest the 
property that the coefficients of $t^n$ are positive Laurent polynomials in any 
seed cluster data. One can then use the symmetries of these systems to express 
all other cluster variables (for negative values of $n$ and for odd/even values of 
$n$ in the cases (ii) and (iii)) in terms of the coefficients of $F(t)$. The expressions 
are always such that if the coefficients of $F(t)$ are positive Laurent polynomials, 
so are the remaining cluster variables.

One way of computing the generating function $F(t)$ is by expressing it as the 
partition function of weighted paths on a graph. If the weights are positive 
Laurent polynomials in some initial seed data, then so are the coefficients 
of $t^n$ in $F(t)$, and hence the cluster variables. 

For a given graph with vertices connected by edges, we assign a weight 
$w_{i,j}$ ($i,j$ two vertices) associated to the edge connecting vertex 
$i$ to vertex $j$. In general, we do not require that $w_{ij}$ is equal to $w_{ji}$. 
The weight of a path from vertex $i$ to vertex $j$ is the product of all the 
weights along the edges traversed by the path, and its length is the number 
of steps traversed. The partition function of paths from vertex $a$ to vertex 
$b$ is the sum over all paths from vertex $a$ to vertex $b$ of the weights of the paths.

One can use the following presentations for the generating functions $F(t)$ in 
terms of path partition functions.
\begin{itemize}
\item $(b,c)=(2,2)$: Consider the graph composed of 4 vertices, labeled $0,1,2,3$, 
with edges connecting vertices $i$  and $i+1$. We assign a weight $w_{ij}$ to a 
step from $i$ to $j$ along each edge, when such an edge exists. 
The weights are $w_{i,i+1}=1,\ w_{i,i-1}=t y_i$, 
where 
$$ y_1=R_1 R_0^{-1},\ y_2 = R_1^{-1}R_0^{-1},\ y_3=R_1^{-1}R_0. $$ 
Note that these weights are positive Laurent monomials in the initial data $R_0,R_1$. 

Then one can show that $F(t)R_{0}^{-1}$ is equal to the partition function of paths 
from vertex 0 to itself on this weighted graph. Since each weight is a positive 
Laurent monomial in the initial data, so is each coefficient of $t^n$ in the partition 
function, which represents paths of length $2n$. 

An equivalent statement is that the generating function $F(t)$ is the following finite 
continued fraction:
$$F(t) R_0^{-1} = \frac{1}{1-t \frac{y_1}{1-t\frac{y_2}{1-ty_3}}}.
$$

We remark that a key fact which enables us to prove these formulas for the solution 
is the existence of  an integral of motion, 
$K=y_1 + y_2 + y_3$, and also that $y_3 y_1=1$. $K$ is invariant 
under $T_2$, namely under $R_n\mapsto R_{n+1}$ for all $n$.  
\begin{figure}
\centering
\includegraphics[width=12.cm]{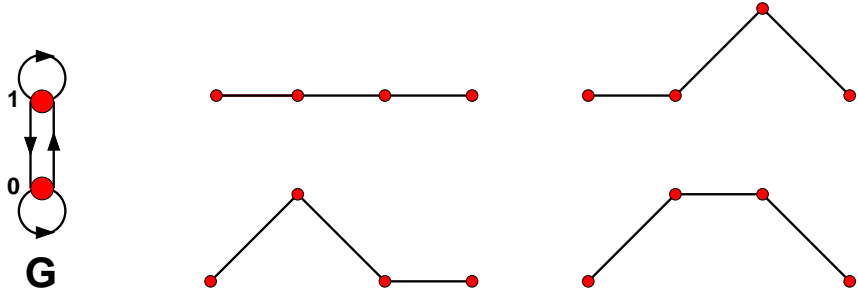}
\caption{\small Weighted paths on the barbell graph $G$ generate $R_n$ 
for the cluster algebras $(1,4)$ and $(4,1)$. 
We have represented the $4$ paths of length $3$ on $G$, from the vertex $0$ to itself,
with respective weights $y_1^3$, $y_1y_2$, $y_2y_1$ and $y_3y_2$.}\label{fig:barbell}
\end{figure}
\item $(b,c)=(1,4)$: Let $G$ be the barbell graph, with vertices $0$ and $1$ 
connected by an edge, and each vertex connected to itself by a loop. 
See Figure \ref{fig:barbell}. We consider paths on this graph from node 0 
to itself of length $n$. We place a weight on each oriented edge. A path 
step from $i$ to $j$ contributes a weight $w_{i,j}$, so that a step from 1 to 
1 around the loop connected to it contributes weight $w_{1,1}$. We choose
$
w_{0,0}=t y_1,\
w_{0,1}=t,\
w_{1,1}=t y_3,\
w_{1,0}=t y_2,
$
where 
$$
y_1= \frac{1+R_1}{ R_0},\ y_2 = \frac{R_0^4 +(1+ R_1)^2}{R_0^4 R_1},\ y_3 = 
\frac{R_0^4 + 1 + R_1}{R_0^2 R_1}.
$$
This time, the weights are not monomials, but they are positive Laurent polynomials. 
Paths on this graph therefore all have weights which are positive Laurent polynomials. 
Explicitly, the generating function for $R_{2n}$ in terms of the cluster seed $(R_0,R_1)$ 
is obtained from the expansion in $t$ of the finite continued fraction
$$
F(t) = \frac{R_0} {1- ty_1 - t^2 y_2\frac{1}{1-t y_3}}.
$$
The odd cluster variables are obtained from the relation $R_{2n}R_{2n-2}-1=R_{2n-1}$. 
One can show that $1$ is a term in the Laurent polynomial $R_{2n} R_{2n-2}$ so that 
$R_{2n-1}$ is positive.
We note here that we have a conserved quantity, $K=y_1+y_3$, and that 
$y_1 y_3-y_2=1$. $K$ is left invariant under the compound 
mutation $\mu$, i.e. under $R_n\mapsto R_{n+2}$ for all $n$.

Positivity with respect to the initial data $(R_1,R_2)$ follows from the solution 
to the problem with $(b,c)=(4,1)$, due to the symmetry between the two systems.

\item $(b,c)=(4,1)$: Here, we use the same graph, but replace the weights $y_i$ 
with $y'_{\al}(R_0,R_1)= y_{4-\al}(R_1,R_0)$. The generating function for odd 
cluster variables is the partition function on this graph of paths from node 1 to itself. 
Even cluster variables are obtained from the equation $R_{2n+1} R_{2n-1}-1=R_{2n}$, 
and can again be shown to be positive. The conserved quantity is $K'=y_1'+y_3'$, 
and we also have $y_3'y_1'-y_2'=1$. 
\end{itemize}

We will show below that each of these results generalizes to the non-commutative 
case. We will have more conserved quantities in the non-commutative case, due 
to the fact that the commutator $C\neq 1$, and is not, in fact, central. Thus, $C$ 
itself will be a conserved quantity.

\subsection{A rank 2 non commutative cluster algebra}\label{ranktwo}
We consider now the evolution \eqref{taop}.
Define $C=x y x^{-1} y^{-1}$ and let $R_0 = C x$ and $R_1=y$. Given a pair 
$b,c\geq 0$, define $\{R_n\}_{n\in \Z\geq 0}$ by
$$
T_a \begin{pmatrix} C R_n \\ R_{n+1} \end{pmatrix} = \begin{pmatrix} C 
R_{n+1} \\ R_{n+2} \end{pmatrix}
$$
where $a=b$ if $n$ is even and $a=c$ if $n$ is odd. That is, 
\begin{equation}\label{mutation}
R_{n+1} C R_{n-1} = \left\{\begin{array}{ll} 1 + R_{n}^b, & n \hbox{odd};\\
1+R_n^c, & n\hbox{ even}.\end{array}\right.
\end{equation}
Clearly, this defines $R_n$ for negative values of $n$ as well.

The expression for the commutator, 
\begin{equation}\label{commutator}
C=R_{n+1}^{-1} R_n R_{n+1} R_n^{-1},
\end{equation}
allows us to interpret it as a  conserved quantity of the discrete evolution 
$T_a$, because its value is 
independent of $n$. 
In fact, one can check that, generally, any recursion 
relation of the form
\begin{equation}\label{comr}
R_{n+1} R_n^{-1} R_{n-1} = f_n(R_n)
\end{equation}
has $C$ as a conserved quantity. Equation \eqref{mutation} is a special 
case of this. 

Note also that Eq. \eqref{commutator} implies a quasi-commutation relation 
\begin{equation}\label{qcomm}
R_{n+1} C R_n = R_n R_{n+1}. 
\end{equation}

If $C=1$, that is, if $x$ and $y$ commute, we recover the rank $2$ cluster 
algebra of type $(b,c)$.

Similarly, if we write $C=q$, a central element, then Eq. \eqref{qcomm} turns into the
quantum commutation relation $R_n R_{n+1} = q R_{n+1} R_n$, 
and we recover the rank 2 quantum cluster algebra of \cite{BZ}.

Therefore we call the transformation \eqref{taop} a mutation and the ring 
$\mathcal A$ a non-commutative cluster algebra of rank 2. In general, the 
Laurent property is not proven for this algebra except in special cases. 
For example, in the case where $B$ is obtained from the Dynkin diagram 
of finite type $A_2, B_2$ or $G_2$, one can check that the cluster algebra 
is finite, up to conjugation by $C$, with the same period as in the commutative 
case. In those cases, the cluster variables are positive Laurent polynomials 
with coefficients which are either $0$ or $1$. In the case where $B$ is 
obtained from the Dynkin diagram of affine $A_1$, the Laurent property 
has been proved by Usnich, but not the positivity \cite{Kont}.

In this paper we do not attempt a general proof of Conjecture \ref{main}, but 
we generalize our proof for the commutative integrable (affine Dynkin diagram) 
cluster algebras of rank 2. We show that the path models  of 
Ref.\,\cite{cluster4} have a simple non-commutative analogue. We therefore get 
an explicit expression for all cluster variables $R_n$. This allows us to prove 
the Laurent property and positivity for those cases.

What distinguishes these cases is that the transformation \eqref{mutation} is 
{\em integrable}.  That is, in each case, there exist two Laurent polynomials 
(one of them being the commutator $C$) in the variables  $R_n, R_{n+1}$, 
which are invariant under \eqref{mutation}. The generating function for the cluster 
variables can be expressed in all cases as a finite continued fraction in any 
cluster seed variables. This can be interpreted as a generating function for 
paths with non-commutative weights.

\subsection{Symmetries}\label{symms}

The evolution equations \eqref{mutation}
determine all $R_n$ for $n\in \Z$ uniquely in terms of the initial data
$CR_0=x$ and $R_1=y$.

One may relate the solutions of the $(c,b)$ system to those of the 
$(b,c)$ system by use of the translational symmetry.
Let us  denote by $f^{(b,c)}_n(x,y)$
the solution $R_n$ of \eqref{mutation} expressed in terms of its initial data
$CR_0=x$ and $R_1=y$. Let us also denote by $g^{(b,c)}_n(X,Y)$ the
solution $R_n$ of \eqref{mutation} expressed in terms the data $CR_1=X$, $R_2=Y$.

\begin{lemma}\label{bctrans} For all $n\in \Z$, we have:
$$f^{(c,b)}_n(x,y)= g^{(b,c)}_{n+1}(x,y), \ n\in\Z.$$
\end{lemma}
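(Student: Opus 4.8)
The plan is to exploit the parity-shift symmetry of the recursion \eqref{mutation}: shifting the index $n$ by one carries a solution of the $(b,c)$-system into a solution of the $(c,b)$-system, with the \emph{same} commutator $C$. So rather than manipulating $f$ and $g$ directly, I would build the relevant sequences by hand and invoke the uniqueness of solutions of \eqref{mutation} recorded at the start of Section \ref{symms}.

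Concretely, first I would set $\tilde R_n := f^{(c,b)}_n(x,y)$, so that $\{\tilde R_n\}_{n\in\Z}$ is by definition the (unique) solution of the $(c,b)$-version of \eqref{mutation} with seed $(C\tilde R_0,\tilde R_1)=(x,y)$. Now define $R_n:=\tilde R_{n-1}$, equivalently $\tilde R_m=R_{m+1}$. I claim $\{R_n\}_{n\in\Z}$ solves the $(b,c)$-version of \eqref{mutation}. Indeed, writing the $(c,b)$-recursion for $\tilde R$ at index $n-1$ gives $\tilde R_n\,C\,\tilde R_{n-2}=1+\tilde R_{n-1}^{\,c}$ when $n-1$ is odd and $=1+\tilde R_{n-1}^{\,b}$ when $n-1$ is even; substituting $\tilde R_m=R_{m+1}$ this reads $R_{n+1}\,C\,R_{n-1}=1+R_n^{\,c}$ for $n$ even and $=1+R_n^{\,b}$ for $n$ odd, which is exactly \eqref{mutation} for the $(b,c)$-system. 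The one point that genuinely needs checking is that the element sitting between $R_{n+1}$ and $R_{n-1}$ here is still the commutator of the $R$-sequence and not some unrelated constant: this is immediate from \eqref{commutator}, since $R_{n+1}^{-1}R_nR_{n+1}R_n^{-1}=\tilde R_n^{-1}\tilde R_{n-1}\tilde R_n\tilde R_{n-1}^{-1}$ is independent of $n$, hence equals $C$.

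Next I would read off the seed of $\{R_n\}$. From $R_n=\tilde R_{n-1}$ we get $R_1=\tilde R_0$ and $R_2=\tilde R_1$, so $CR_1=C\tilde R_0=x$ and $R_2=\tilde R_1=y$. By the very definition of $g^{(b,c)}$ as the solution of the $(b,c)$-system expressed through the data $(CR_1,R_2)$, we therefore have $R_n=g^{(b,c)}_n(x,y)$. Comparing with $R_n=\tilde R_{n-1}=f^{(c,b)}_{n-1}(x,y)$ yields $f^{(c,b)}_{n-1}(x,y)=g^{(b,c)}_n(x,y)$ for all $n\in\Z$, which is the assertion of the lemma after the relabeling $n\mapsto n+1$.

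I do not expect a real obstacle here; this is a symmetry argument rather than a computation. The only things to keep straight are the parity bookkeeping that interchanges the exponents $b$ and $c$ under the index shift, and the use of \eqref{commutator} to confirm that the shifted sequence carries the same commutator $C$, so that it really solves \eqref{mutation} and not a deformed version of it. Everything else reduces to the uniqueness of the solution of \eqref{mutation} given its seed.
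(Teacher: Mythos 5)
Your proof is correct and is exactly the translational-symmetry argument the paper has in mind (the paper states Lemma \ref{bctrans} with only the remark that it follows from the translational symmetry, so you have simply spelled out the index shift, the parity bookkeeping that swaps $b$ and $c$, and the check that the shifted sequence has the same commutator $C$ and the right seed). No issues.
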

Thus, solutions of the $(c,b)$ system are given by those of the $(b,c)$ system. 

Moreover, one can 
relate the solution for $n<0$ to that for $n\geq 0$. Define an anti-automorphism $*$ on $\mathbb F$ by 
\begin{equation}\label{defstar}
x\mapsto x^*=y C=yxyx^{-1}y^{-1},\quad y\mapsto y^*=C^{-1} x=y x y^{-1}.
\end{equation}
This is clearly an involution. 
In particular, we have $C^*=C$, $R_1^*=R_0$ and $R_0^*=R_1$. 

Let $R_n=f^{(b,c)}_n(x,y)$ be a solution of the $(b,c)$-system, and
$S_n=R_{1-n}^*$. Changing $n\to 1-n$ in \eqref{mutation} and applying $*$, 
we see that $S_n$ satisfies the $(c,b)$-system. Since
$S_0=R_1^*=R_0$, so that $CS_0=x$, and $S_1=R_0^*=R_1=y$, we have 
that $S_n=f_n^{(c,b)}(x,y)$. Therefore,
\begin{equation}\label{symthm}
f_{-n}^{(c,b)}(x,y)=\left(f_{n+1}^{(b,c)}(x,y)\right)^*.
\end{equation}

Note that the anti-automorphism $*$ \eqref{defstar} sends positive 
Laurent monomials of $x,y$ 
to positive Laurent monomials of $x,y$. 

To summarize, the symmetries assure us that, in the case $b=c=2$, it is
sufficient to prove that $R_n$ is a positive Laurent polynomial of $x,y$ for $n\geq 0.$
For $(b,c)=(1,4)$, we may restrict our attention to $n\geq 0$
but we must find  $R_n$ as a function of both
$(x,y)=(CR_0,R_1)$ and $(X,Y)=(CR_1,R_2)$. For $(b,c)=(4,1)$, 
the solutions will be expressed in terms of those of $(1,4)$. 
In all cases, the expressions for $n<0$ follow from equation \eqref{symthm}.

\section{The non-commutative cluster algebra in the case $b=c=2$}\label{twotwo}

\subsection{Conserved quantities and linear recursions}

The non-commutative $(2,2)$-system
\begin{equation}\label{ncq}
R_{n+1}CR_{n-1}=(R_n)^2+1
\end{equation}
is a discrete integrable equation in the sense that it has a 
conserved quantity in addition to the commutator $C$ \eqref{commutator}. 
\begin{lemma}\label{conster}
The polynomial in the solutions $R_n$ of the $(2,2)$ system \eqref{mutation}
 $K=R_{n+1}^{-1} R_n +R_{n+1}^{-1} R_n^{-1}+R_{n+1}R_n^{-1}$
 is independent of $n$.
\end{lemma}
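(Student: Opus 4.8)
The plan is to show that the quantity $K_n := R_{n+1}^{-1}R_n + R_{n+1}^{-1}R_n^{-1} + R_{n+1}R_n^{-1}$ satisfies $K_{n}=K_{n-1}$ for all $n$, which gives $n$-independence by induction. Since the recursion \eqref{ncq} expresses $R_{n+1}$ in terms of $R_n$ and $R_{n-1}$ (and $C$), it is natural to rewrite everything in terms of $R_{n-1}, R_n$ and compare. First I would record the two facts I expect to need: the recursion in the form $R_{n+1} = (R_n^2+1)R_{n-1}^{-1}C^{-1}$, hence also $R_{n+1}^{-1} = C R_{n-1}(R_n^2+1)^{-1}$; and the quasi-commutation relation \eqref{qcomm}, $R_{n+1}CR_n = R_n R_{n+1}$, together with its shifted form $R_n C R_{n-1} = R_{n-1} R_n$, which let me move $C$'s past adjacent $R$'s. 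A useful consequence is that $C$ commutes with any $R_n$ in a controlled way; in particular $C = R_{n}^{-1}R_{n-1}R_nR_{n-1}^{-1}$ from \eqref{commutator}.

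The main step is a direct substitution. In $K_n$, replace each occurrence of $R_{n+1}^{\pm 1}$ using the two displayed expressions above. This yields
\[
K_n = CR_{n-1}(R_n^2+1)^{-1}R_n + CR_{n-1}(R_n^2+1)^{-1}R_n^{-1} + (R_n^2+1)R_{n-1}^{-1}C^{-1}R_n^{-1}.
\]
The first two terms combine: $CR_{n-1}(R_n^2+1)^{-1}(R_n + R_n^{-1}) = CR_{n-1}(R_n^2+1)^{-1}R_n^{-1}(R_n^2+1) = CR_{n-1}R_n^{-1}$, using that $R_n+R_n^{-1} = R_n^{-1}(R_n^2+1)$ and that $(R_n^2+1)^{-1}$ commutes with $R_n^2+1$. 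So
\[
K_n = CR_{n-1}R_n^{-1} + (R_n^2+1)R_{n-1}^{-1}C^{-1}R_n^{-1}.
\]
Now I want this to equal $K_{n-1} = R_n^{-1}R_{n-1} + R_n^{-1}R_{n-1}^{-1} + R_nR_{n-1}^{-1}$. The term $(R_n^2+1)R_{n-1}^{-1}C^{-1}R_n^{-1}$ should split, via $R_n^2+1 = R_n^2 + 1$, into a piece giving $R_nR_{n-1}^{-1}$-type contributions after commuting $C^{-1}$ through — this is where the quasi-commutation relation \eqref{qcomm} and \eqref{commutator} must be used carefully to pass $C^{\pm 1}$ across $R_{n-1}$ and $R_n$. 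Concretely, from $R_nCR_{n-1}=R_{n-1}R_n$ we get $R_{n-1}^{-1}C^{-1} = (R_nCR_{n-1})^{-1}R_n \cdot (\text{rearranged})$; more simply $C^{-1} = R_{n-1}^{-1}R_n^{-1}R_{n-1}R_n$, so $R_{n-1}^{-1}C^{-1} = R_{n-1}^{-1}R_{n-1}^{-1}R_n^{-1}R_{n-1}R_n = R_{n-1}^{-2}R_n^{-1}R_{n-1}R_n$. Substituting and simplifying the two remaining terms of $K_n$ against the three terms of $K_{n-1}$ is then a finite noncommutative manipulation.

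The hard part will be the bookkeeping in this last simplification: the relation $C = R_n^{-1}R_{n-1}R_nR_{n-1}^{-1}$ is a genuine constraint (not $C$ central), so every time I move a $C$ I change the word, and I must be disciplined about which of the equivalent forms of $C$ to insert at each spot so that cancellations actually occur. I expect the cleanest route is to first reduce both $K_n$ and $K_{n-1}$ to expressions involving only $R_{n-1}, R_n$ and no explicit $C$ (eliminating $C$ via \eqref{commutator}), and then verify the resulting identity as a relation in the free skew field generated by $R_{n-1}$ and $R_n$ alone; once $C$ is eliminated there are no hidden relations and the check is purely formal. As a sanity check, setting $C=1$ must reproduce the known commutative conserved quantity $K = R_{n+1}/R_n + 1/(R_{n+1}R_n) + R_{n+1}^{-1}R_n$ from the $(2,2)$ path model, which fixes the ordering conventions.
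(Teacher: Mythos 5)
Your overall strategy---substitute $R_{n+1}^{\pm1}$ from the recursion \eqref{ncq} into $K_n$ and reduce everything to a word in $R_{n-1},R_n$---is sound and genuinely different from the paper's argument, and your first reduction
\[
K_n \;=\; CR_{n-1}R_n^{-1} \;+\; (R_n^2+1)R_{n-1}^{-1}C^{-1}R_n^{-1}
\]
is correct. But the proof is not finished as written, and the one concrete identity you propose for finishing it is false: from \eqref{commutator} shifted to $n-1$ one has $C=R_n^{-1}R_{n-1}R_nR_{n-1}^{-1}$, hence $C^{-1}=R_{n-1}R_n^{-1}R_{n-1}^{-1}R_n$, \emph{not} $R_{n-1}^{-1}R_n^{-1}R_{n-1}R_n$; your expression equals the conjugate $R_{n-1}^{-1}CR_{n-1}$, which is neither $C$ nor $C^{-1}$ since $C$ is not central. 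Plugging your version in, the advertised ``finite noncommutative manipulation'' does not close. The correct finish is actually painless: the shifted quasi-commutation relation \eqref{qcomm}, $R_nCR_{n-1}=R_{n-1}R_n$, gives $CR_{n-1}R_n^{-1}=R_n^{-1}R_{n-1}$ directly and, upon inverting both sides, $R_{n-1}^{-1}C^{-1}R_n^{-1}=R_n^{-1}R_{n-1}^{-1}$; substituting these into the display above yields
\[
K_n=R_n^{-1}R_{n-1}+(R_n^2+1)R_n^{-1}R_{n-1}^{-1}
=R_n^{-1}R_{n-1}+R_nR_{n-1}^{-1}+R_n^{-1}R_{n-1}^{-1}=K_{n-1}.
\]
Your fallback (eliminate $C$ entirely and verify a formal identity in $R_{n-1},R_n$) is a legitimate sufficient condition, but you would still have to actually carry out that verification, which the proposal does not do.

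For comparison, the paper takes a different route: it works with the equivalent left/right forms $K_n=R_n^{-1}(R_{n+1}C+R_{n-1})$ and $L_n=(R_{n+1}+CR_{n-1})R_n^{-1}$, proves $K_n=L_n$ using two expressions for the commutator, and gets $K_{n+1}=L_n$ by subtracting consecutive instances of \eqref{ncq}. That costs a little more setup but delivers the two constant-coefficient linear recursions \eqref{recur}--\eqref{recul} as an immediate byproduct, and these are what the continued-fraction and path constructions actually use. Your direct substitution, once the $C^{-1}$ identity is corrected, is shorter for the lemma itself but produces neither recursion, so you would need a separate step to recover them.
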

\begin{proof}
Define
\begin{equation} \label{linKL}
K_n=R_n^{-1}(R_{n+1}C+R_{n-1}), \qquad L_n=(R_{n+1}+CR_{n-1})R_n^{-1}\ .
\end{equation}
Then $K_n=L_n$ as a consequence of the first conservation law, upon substituting
$C=R_{n+1}^{-1}R_nR_{n+1}R_n^{-1}$ into the expression for $K_n$ and 
$C=R_n^{-1}R_{n-1}R_nR_{n-1}^{-1}$ into that for $L_n$. 
Subtracting Equation \eqref{ncq} for $n$ from that for $ n+1$,
$$0=(R_{n+2}CR_n-R_{n+1}^2)-(R_{n+1}CR_{n-1}-R_n^2)=R_{n+1}(K_{n+1}-L_n)R_n$$
So we deduce that $K_n=K$ is independent of $n$.
\end{proof}

Substituting  $L_n=K_n=K$ into \eqref{linKL}, we have

\begin{lemma}
There exist two linear recursion relations with constant coefficients satisfied by 
the solutions of \eqref{mutation}:
\begin{eqnarray}
R_{n+1}C+R_{n-1}&=&R_n K \label{recur}\\
R_{n+1}+CR_{n-1}&=&KR_n  \label{recul}
\end{eqnarray}
\end{lemma}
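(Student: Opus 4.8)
The plan is to derive the two stated linear recursions directly from the previous lemma by unwinding the definitions of $K_n$ and $L_n$ in \eqref{linKL}. The key observation, already established in the proof of Lemma~\ref{conster}, is that $K_n=L_n=K$ is a single constant independent of $n$. So the entire content here is purely formal substitution.

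First I would take the definition $K_n=R_n^{-1}(R_{n+1}C+R_{n-1})$ from \eqref{linKL} and use $K_n=K$ to write $R_n^{-1}(R_{n+1}C+R_{n-1})=K$. Left-multiplying both sides by $R_n$ gives exactly \eqref{recur}: $R_{n+1}C+R_{n-1}=R_nK$. Symmetrically, from $L_n=(R_{n+1}+CR_{n-1})R_n^{-1}$ and $L_n=K$ we get $(R_{n+1}+CR_{n-1})R_n^{-1}=K$, and right-multiplying by $R_n$ yields \eqref{recul}: $R_{n+1}+CR_{n-1}=KR_n$. That is the whole argument.

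There is no real obstacle here; the only thing to be careful about is the non-commutativity, namely that in \eqref{recur} the factor $K$ must appear on the \emph{right} of $R_n$ while in \eqref{recul} it appears on the \emph{left}, reflecting the two genuinely different ways (via $K_n$ versus $L_n$) of presenting the conserved quantity. One should also note that these are well-defined since each $R_n$ is invertible in the skew field $\mathbb F$, so the manipulations $R_n^{-1}(\cdots)=K \Rightarrow (\cdots)=R_nK$ are legitimate. In short, the proof is a two-line consequence of Lemma~\ref{conster}, obtained by clearing the inverse factor $R_n^{-1}$ from the two equivalent expressions $K_n$ and $L_n$ for $K$.
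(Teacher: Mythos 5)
Your proof is correct and is exactly the paper's argument: the paper introduces the lemma with the phrase ``Substituting $L_n=K_n=K$ into \eqref{linKL}, we have,'' which is precisely your clearing of the $R_n^{-1}$ factor from the two expressions $K_n$ and $L_n$. Nothing is missing; your remark about the side on which $K$ appears correctly reflects the only subtlety.
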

These two recursion relations are equivalent modulo the first conserved quantity.

\subsection{Paths with noncommutative weights and positivity}\label{twotwopath}

Define a generating function for the variables $R_n$ with $n\geq 0$,
$$F(t)=\sum_{n\geq 0} t^n R_n.$$
\begin{thm}\label{continued}
\begin{equation}\label{contfrac}
F(t)=\left(1-t \left(1-t(1-t y_3)^{-1} y_2\right)^{-1} y_1\right)^{-1} R_0,
\end{equation}
where the "weights" $y_i$ are defined as
\begin{equation}
 y_1=R_1R_0^{-1},
\ y_2=R_1^{-1}R_0^{-1}, \ y_3=R_1^{-1}R_0.
\end{equation}
\end{thm}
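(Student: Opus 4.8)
The plan is to convert the continued-fraction identity \eqref{contfrac} into a statement about a finite linear system and verify it by checking that both sides satisfy the same recursion with the same initial data. First I would clear the denominators in \eqref{contfrac}: writing $G(t) = F(t)R_0^{-1}$, the claimed identity says $G(t)\left(1 - t(1-t(1-ty_3)^{-1}y_2)^{-1}y_1\right) = 1$, and since $y_3y_1 = R_1^{-1}R_0 R_1 R_0^{-1} = C^{-1}$ is (a conjugate of) the commutator, the nested inverses collapse to a polynomial relation of bounded degree in $t$. Concretely, I would set $H(t) = (1-ty_3)^{-1}$, $P(t) = (1 - tH(t)y_2)^{-1}$, so that the claim is $G(t)(1 - tP(t)y_1) = 1$. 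Multiplying through by the (manifestly invertible over $\mathbb F[[t]]$) factors $(1-ty_3)$ and $(1-tH(t)y_2)$ in the correct order, and using $y_3y_1 = C^{-1}$ together with $y_2$-relations such as $y_1y_2 = R_1R_0^{-1}R_1^{-1}R_0^{-1}$, one obtains an identity of the form
\begin{equation}\label{target}
(R_{n+1}C + R_{n-1})\,(\text{something}) = R_n\,(\text{something}),
\end{equation}
i.e. the coefficient-wise form of \eqref{contfrac} is precisely the linear recursion \eqref{recur} (equivalently \eqref{recul}) from the previous lemma, together with the correct seed values at $n=0,1$.

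The key steps, in order, are: (1) expand $G(t) = \sum_{n\ge 0} t^n R_n R_0^{-1}$ and extract from \eqref{contfrac}, after clearing denominators, a three-term relation among consecutive coefficients; (2) identify that relation with \eqref{recur} by substituting the definitions $y_1 = R_1R_0^{-1}$, $y_2 = R_1^{-1}R_0^{-1}$, $y_3 = R_1^{-1}R_0$ and using the quasi-commutation \eqref{qcomm} and the conserved quantity $K$ of Lemma \ref{conster}; the point is that $y_1 + (\text{a } y_2\text{-term}) + (\text{a } y_3\text{-term})$ reproduces $R_0^{-1}\cdot K\cdot R_0$ or its relatives, matching the constant coefficient $K$ in \eqref{recur}; (3) check the base cases: the $t^0$ coefficient of $G(t)$ is $1$, i.e. $R_0R_0^{-1}=1$, and the $t^1$ coefficient reproduces $R_1R_0^{-1} = y_1$, which matches the continued fraction expanded to first order; (4) conclude by uniqueness of solutions to the linear recursion \eqref{recur} with given initial data. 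An equivalent and perhaps cleaner route is to recognize the finite continued fraction as the $(1,1)$ entry of a product of $2\times 2$ matrices of the form $\begin{pmatrix} 1 & t y_i \\ 0 & 1\end{pmatrix}$ and $\begin{pmatrix} 1 & 0 \\ 1 & 0 \end{pmatrix}$-type transfer matrices, so that \eqref{contfrac} becomes a telescoping transfer-matrix identity; then \eqref{recur} is exactly the statement that the transfer matrix advances $(R_{n-1}, R_n)$ to $(R_n, R_{n+1})$.

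The main obstacle will be bookkeeping the order of non-commutative factors when clearing the nested inverses: unlike the commutative case, $(1-tAB)^{-1}A = A(1-tBA)^{-1}$ must be used carefully, and the weights $y_i$ do not commute with the $R_n$ or with each other, so the "continued fraction" only telescopes because of the specific relations $y_3y_1 = C^{-1}$ and $y_1 y_2 y_3 = R_1 R_0^{-1} R_1^{-1} R_0^{-1} R_1^{-1} R_0$, which must be shown to collapse the three-level fraction to a genuine finite (degree-bounded) object. Once these relations are established and the nesting is resolved, matching to \eqref{recur} is a direct, if somewhat lengthy, substitution; I would keep that computation terse and lean on Lemma \ref{conster} and \eqref{qcomm} to absorb the algebra. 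Finally, positivity is immediate once \eqref{contfrac} is proved, since $y_1, y_2, y_3$ are positive Laurent monomials in $R_0, R_1$ and the expansion of the continued fraction in powers of $t$ has only non-negative integer coefficients built from products of the $y_i$ and $R_0$.
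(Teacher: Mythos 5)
Your strategy is essentially the paper's proof run in reverse: the paper starts from the linear recursion \eqref{recul}, sums it to get $F(t)=(1-tK+t^2C)^{-1}\bigl(R_0-t(KR_0-R_1)\bigr)$, and then factors $1-tK+t^2C=\bigl(1-t(y_2+y_3)\bigr)-t(1-ty_3)y_1$ to produce the continued fraction; you propose to clear the nested denominators and verify the same three-term recursion with the same seed. That is a sound plan and not a genuinely different route. However, there is one concrete error you must fix: by \eqref{commutator} with $n=0$, $y_3y_1=R_1^{-1}R_0R_1R_0^{-1}=C$, \emph{not} $C^{-1}$. If you carry $y_3y_1=C^{-1}$ through your computation, the coefficient of $t^2$ in the cleared-denominator form will be $C^{-1}$ rather than $C$, and the resulting three-term relation will not match \eqref{recul} (which reads $R_{n+2}-KR_{n+1}+CR_n=0$). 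The same wrong identity reappears in your final paragraph as the reason the fraction ``telescopes,'' so the error is load-bearing as written.

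Two smaller points. First, the collapse of the three-level fraction to $\bigl(1-t(y_1+y_2+y_3)+t^2y_3y_1\bigr)^{-1}\bigl(1-t(y_2+y_3)\bigr)R_0$ is a purely formal identity in the free skew field on $y_1,y_2,y_3$ (using only $(1-tA^{-1}B)^{-1}=(A-tB)^{-1}A$); no relations among the $y_i$ are needed for it, contrary to what you suggest. The relations enter only afterwards, to identify $y_1+y_2+y_3=K$, $y_3y_1=C$, and $(y_2+y_3)R_0=KR_0-R_1$ so that the collapsed expression is the generating function of \eqref{recul}. In particular the products $y_1y_2$ and $y_1y_2y_3$ you list never appear. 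Second, your placeholder identity ``$(R_{n+1}C+R_{n-1})(\text{something})=R_n(\text{something})$'' points at \eqref{recur}, but the recursion that matches a left inverse acting on $R_0$ from the left is \eqref{recul}; with $y_3y_1=C$ corrected and \eqref{recul} as the target, your verification closes and coincides with the paper's argument.
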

\begin{proof}
Using Equation \eqref{recul},
$$
F(t)=(1-t K+t^2 C)^{-1} (R_0-t(KR_0-R_1))
$$
Noting that $K=R_1R_0^{-1}+R_1^{-1}R_0^{-1}+R_1^{-1}R_0=y_1+y_2+y_3$, 
$K-R_1R_0^{-1}=y_2+y_3$, and $C=y_3y_1$, we have 
\begin{eqnarray*}
F(t)&=&(1-t(y_1+y_2+y_3)+t^2 y_3y_1)^{-1}(1-t (y_2+y_3))R_0\\
&=& \left(1-t (1-t(y_2+y_3))^{-1}(1-t y_3)y_1\right)^{-1}R_0\\
&=& \left(1-t \left(1-t(1-ty_3)^{-1}y_2\right)^{-1}y_1\right)^{-1}R_0
\end{eqnarray*}
and the Theorem follows.
\end{proof}

This expression for $F(t)$ is to be considered as a power series in $t$ with 
coefficients which are words in the non-commutative variables $y_1,y_2, y_3$. 
Substituting
\begin{equation}\label{positvals}
y_1=y^2x^{-1}y^{-1}, \ y_2=x^{-1}y^{-1}, \ y_3=x y^{-1}, \ R_0=yxy^{-1}
\end{equation}
into \eqref{contfrac}, we deduce
\begin{cor}\label{positcor}
For all $n\geq 0$, the solution $R_n$ of Equation \eqref{ncq} is a Laurent 
polynomial of $x,y$ (with $x=CR_0$ and $y=R_1$)
with only non-negative integer coefficients.
\end{cor}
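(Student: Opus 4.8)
The plan is to deduce Corollary \ref{positcor} directly from the continued-fraction formula of Theorem \ref{continued} by expanding it as a convergent (formal in $t$) power series whose coefficients are manifestly sums of words in $y_1,y_2,y_3$ with non-negative integer coefficients, and then checking that the substitution \eqref{positvals} sends each of $y_1,y_2,y_3$ and $R_0$ to a \emph{positive Laurent monomial} in $x,y$. First I would unfold the nested inverse in \eqref{contfrac} using the geometric series $(1-z)^{-1}=\sum_{k\geq 0} z^k$, applied three times from the inside out: the innermost factor $(1-ty_3)^{-1}=\sum_{k\geq 0}t^k y_3^k$ contributes only non-negative powers of $t$ and words in $y_3$; feeding this into $(1-t(1-ty_3)^{-1}y_2)^{-1}$ and expanding again gives a series in $t$ with coefficients that are non-negative integer combinations of words in $y_2,y_3$; one more expansion of the outermost $(1-t(\cdots)y_1)^{-1}$, followed by right multiplication by $R_0$, yields $F(t)=\sum_{n\geq 0}t^n P_n(y_1,y_2,y_3)R_0$ where each $P_n$ is a finite sum of words in $y_1,y_2,y_3$ with coefficients in $\Z_{\geq 0}$. (This is exactly the ``partition function of weighted paths on the barbell graph'' picture of the commutative case, now with non-commuting weights, so one could alternatively phrase the expansion combinatorially as a sum over lattice paths, but the bare algebraic expansion suffices.)

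Matching powers of $t$ in $F(t)=\sum_{n\geq 0}t^nR_n$ then gives $R_n=P_n(y_1,y_2,y_3)R_0$, a non-negative integer combination of words in $y_1,y_2,y_3$ right-multiplied by $R_0$. It remains to verify the two elementary facts that make the conclusion work. First, that the values in \eqref{positvals} are consistent with the definitions in Theorem \ref{continued}, i.e. that under $x=CR_0$, $y=R_1$ one indeed has $R_0=yxy^{-1}$, $y_1=R_1R_0^{-1}=y^2x^{-1}y^{-1}$, $y_2=R_1^{-1}R_0^{-1}=x^{-1}y^{-1}$, $y_3=R_1^{-1}R_0=xy^{-1}$; this is a short computation using $C=xyx^{-1}y^{-1}$ and $R_0=Cx\cdot C^{-1}$... more precisely $R_0=C^{-1}x$? — one checks from $CR_0=x$ that $R_0=C^{-1}x=yx^{-1}y^{-1}x\cdot$, and reconciles with $yxy^{-1}$ via $C^{-1}x=yxy^{-1}$. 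Second, and this is the only real point, that each of $y^2x^{-1}y^{-1}$, $x^{-1}y^{-1}$, $xy^{-1}$, $yxy^{-1}$ is a Laurent \emph{monomial} in $x,y$ with coefficient $+1$, and that the set of positive Laurent monomials is closed under (non-commutative) multiplication and under taking inverses. Since a word in $y_1,y_2,y_3$ times $R_0$ is then a product of positive Laurent monomials, hence a positive Laurent monomial, and $P_n$ has non-negative integer coefficients, $R_n$ is a Laurent polynomial in $x,y$ with coefficients in $\Z_{\geq 0}$.

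I expect the main obstacle to be purely bookkeeping rather than conceptual: one must be careful that the three nested geometric expansions really do converge in the $t$-adic topology (each inverse is of the form $(1-tU)^{-1}$ with $U$ a power series in $t$ with no constant term after the substitution, so the expansion is legitimate and the coefficient of each $t^n$ is a finite sum), and one must be careful about \emph{sides} — everything is right-multiplication by $R_0$, and the $y_i$ do not commute with each other or with $R_0$, so the words $P_n(y_1,y_2,y_3)$ must be kept ordered exactly as produced by the expansion. No cancellation can occur because all coefficients are non-negative and products of positive monomials never cancel, so positivity is immediate once the expansion is written down; the subtlety is only in confirming that the nested-inverse expression is a well-defined element of $\mathbb F[[t]]$ and that \eqref{positvals} is the correct dictionary. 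Thus the corollary follows from Theorem \ref{continued} together with the observation that positive Laurent monomials form a subsemigroup of $\mathbb F^\times$ closed under inversion.
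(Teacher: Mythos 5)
Your proposal is correct and is essentially the paper's own argument: the authors likewise deduce Corollary \ref{positcor} by reading \eqref{contfrac} as a formal power series in $t$ whose coefficients are non-negative integer combinations of words in $y_1,y_2,y_3$ (made precise as the path partition function in Theorem \ref{paths}) and then substituting the values \eqref{positvals}, each of which is a positive Laurent monomial, so that every word evaluates to a positive Laurent monomial. Your side remark correctly resolves the dictionary as $R_0=C^{-1}x=yxy^{-1}$ (consistent with $CR_0=x$), which is the convention the paper actually uses.
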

From the discussion of the previous section, the same is true for $R_n$ with $n<0$.

We can use Equation \eqref{contfrac} to interpret $R_n$ as a partition
function for paths with non-commuting weights. 
\begin{thm}\label{paths}
For all $n\geq 0$, the quantity $R_nR_0^{-1}$,
where $R_n$ is the solution 
\ref{ncq}, is the partition function for paths along the segment $[0,3]$
starting and ending at $0$ with $2n$ steps, with a weight $1$ per step $i\to i+1$
and $y_i$ per step $i\to i-1$ given by \eqref{positvals}, the total (non-commutative)
weight of each path being the product from left to right of
the step weights in the order in which they are visited. 
\end{thm}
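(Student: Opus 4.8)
The plan is to identify the continued-fraction expression for $F(t)$ in Theorem~\ref{continued} with a transfer-matrix (path-generating) computation on the segment $[0,3]$. First I would set up the combinatorial side precisely: let $Z$ be the formal sum over all paths on the linear graph with vertices $0,1,2,3$, edges between consecutive vertices, starting and ending at $0$, where a step $i\to i+1$ carries weight $t$ and a step $i\to i-1$ carries weight $ty_i$, and the weight of a path is the left-to-right product of its step weights. Grouping paths by $2n$, the number of steps, gives $Z=\sum_{n\ge 0} Z_{2n}$ with $Z_{2n}$ a word of length $2n$ in the $y_i$'s times $t^{2n}$; the claim to be proven is $Z_{2n}=R_{2n}R_0^{-1}\cdot(\text{appropriate power of }t)$, i.e. that $F(t)$ with $t\to t$ interpreted step-by-step equals $Z\,R_0$ after absorbing the $t$'s. (One must be slightly careful: $F(t)=\sum t^n R_n$ sums over \emph{all} $n$, whereas paths of length $2n$ only produce $R_{2n}$; the resolution is that in case $b=c=2$ one proves positivity of $R_{2n}$ from the path model and of $R_{2n+1}$ from $R_{2n+1}=R_{2n+2}CR_{2n}-\ldots$ or directly from the shifted seed, so I would state Theorem~\ref{paths} for the even-indexed generating function.)

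Next I would prove the path/continued-fraction identity by the standard one-step recursion for Dyck-like paths on a finite segment, being scrupulous about the order of non-commutative multiplication. Let $F_j(t)$ denote the partition function of paths from $j$ to $j$ that stay in $[j,3]$, with the convention that a path's weight is read left to right in visiting order. A first-return decomposition at vertex $j$ gives $F_j = \bigl(1 - t\,G_{j+1}\,y_j\bigr)^{-1}$ or $F_j=\bigl(1-t\,y_j\,G_{j+1}\bigr)^{-1}$ depending on whether the excursion into $[j+1,3]$ is recorded before or after the return step $j+1\to j$ of weight $ty_j$; I would pin down which by tracking one short path (length $2$: step $j\to j+1$ of weight $t$, then $j+1\to j$ of weight $ty_j$, contributing $t^2 y_j$), then iterate down from vertex $3$, where $F_3=(1-ty_3)^{-1}$, obtaining exactly the nested form $\bigl(1-t(1-t(1-ty_3)^{-1}y_2)^{-1}y_1\bigr)^{-1}$ of \eqref{contfrac}. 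Multiplying on the right by $R_0$ then matches Theorem~\ref{continued}, so $F(t)=Z\,R_0$ in the formal-power-series sense, which is the assertion of Theorem~\ref{paths}.

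The main obstacle I anticipate is purely bookkeeping but genuinely delicate: getting the non-commutative \emph{order} of weights right in the first-return recursion, since in the commutative case the two decompositions coincide and one never has to choose. Concretely, the weight convention ``product from left to right in the order in which steps are visited'' means that when a path first leaves $j$, makes an excursion $E$ in $[j+1,3]$, returns to $j$, and then continues with a further path $P'$ from $j$, its weight is $(\text{weight of }E)\cdot(ty_j)\cdot(\text{weight of }P')$; summing over $E$ and over $P'$ and recognizing the return-step-weight $ty_j$ as sitting \emph{between} the $G_{j+1}$-factor and the remaining $F_j$-factor is exactly what produces $F_j=(1-t\,G_{j+1}\,y_j)^{-1}$ rather than $(1-t\,y_j\,G_{j+1})^{-1}$, and an off-by-one or left/right slip here would give a wrong (and non-positive-looking) formula. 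Once this is fixed, positivity is immediate: each $y_i$ is a positive Laurent monomial in $x,y$ by \eqref{positvals}, $R_0=yxy^{-1}$ likewise, path partition functions are sums (with coefficient $+1$) of products of such monomials, hence $R_{2n}=Z_{2n}R_0$ (up to the recorded power of $t$) is a positive Laurent polynomial, recovering Corollary~\ref{positcor}; the remaining cases $n<0$ and odd $n$ follow from the symmetries of \S\ref{symms} and the relation \eqref{recul} as already indicated.
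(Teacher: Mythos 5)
Your core argument --- the first-return decomposition of loops at a vertex $j$ into an excursion into $[j+1,3]$ followed by the weighted return step, iterated down from the top vertex to produce the nested continued fraction of Theorem~\ref{continued} --- is exactly the paper's proof, and you correctly identify and resolve the one genuinely delicate point, namely that the left-to-right visiting order forces $F_j=(1-t\,G_{j+1}\,y_j)^{-1}$ rather than $(1-t\,y_j\,G_{j+1})^{-1}$. That part is sound.

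However, your setup contains a concrete bookkeeping error that leads you to a wrong conclusion about the scope of the theorem. The weight of an ascending step is $1$, not $t$: only descending steps $i\to i-1$ carry $ty_i$. A loop at $0$ of $2n$ steps has exactly $n$ descending steps, hence contributes to the coefficient of $t^n$, which in $F(t)R_0^{-1}=\sum_n t^n R_nR_0^{-1}$ is $R_nR_0^{-1}$ --- \emph{not} $R_{2n}R_0^{-1}$. So the path model computes every $R_n$, odd and even (the paper's own example exhibits $R_3$ as the sum over the five $6$-step loops), and your parenthetical proposal to restate the theorem for even indices only, with a separate argument for $R_{2n+1}$, is both unnecessary and problematic: the patch you sketch via the exchange relation would require showing that $R_{n+1}CR_{n-1}-1$ stays positive, which is not immediate and is precisely the kind of subtraction the paper only has to confront in the $(1,4)$ case. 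Your second paragraph is in fact internally inconsistent with your first (the test path of length $2$ would contribute $t\,y_j$, not $t^2y_j$, and your final continued fraction carries one power of $t$ per descent, matching \eqref{contfrac}); once you adopt the correct convention of weight $1$ per up-step throughout, your own computation proves the theorem as stated for all $n\geq 0$ with no supplementary argument.
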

\begin{proof}
The continued fraction $F(t)R_0^{-1}$
of Theorem \ref{continued} may be computed 
by the following recursion:
\begin{eqnarray*}
F_k&=&(1-t F_{k+1}y_k)^{-1}  \ \ (k=1,2,3)\\
F_4&=&1\\
F(t)R_0^{-1}&=&F_1
\end{eqnarray*}
To get the series in $t$, we have to expand each intermediate step as:
$$F_k=\sum_{n\geq 0} t^n (F_{k+1} y_k)^n$$
Using this as an induction step, it allows to interpret $F_k$
as the partition for paths on $[k-1,3]$, from and to $(k-1)$, 
with weight $1$ per step $i\to i+1$ and $ty_i$ per step $i\to i-1$. 
This is clearly true for $F_4=1$, the partition function for the trivial path
from $3\to 3$ on the set $\{3\}$, with zero step.
For intermediate $k$'s, we simply decompose paths on $[k-1,3]$ from $k-1\to k-1$
into segments delimited by the ascending steps $k-1\to k$
and the next descending step $k\to k-1$,
the former receiving the weight $1$ the latter the weight $ty_k$. 
In-between any two such steps, the path only explores the segment $[k,3]$, 
with the partition function $F_{k+1}$. Finally,
the weights are multiplied in the same order in which the steps are taken, 
and the Theorem follows.
\end{proof}

\begin{figure}
\centering
\includegraphics[width=15.cm]{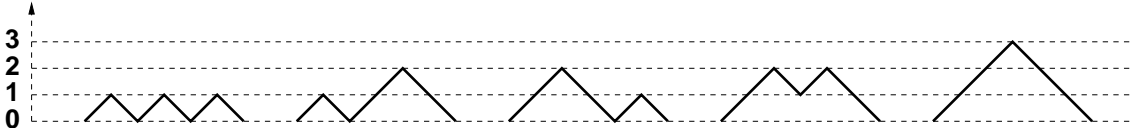}
\caption{\small The five paths on $[0,3]$ of length $6$, from $0\to 0$.}\label{fig:fivepath}
\end{figure}

\begin{example}
For $n=3$, the five paths on $[0,3]$ with $6$ steps from $0\to 0$ are depicted in Figure
\ref{fig:fivepath}, and contribute respectively to $R_3$ (with weight $y_i$ per
descending step $i\to i-1$):
\begin{eqnarray*}
R_3&=&(y_1^3+y_1y_2y_1+y_2y_1^2+y_2^2y_1+y_3y_2y_1)R_0\\
&=&y^2x^{-1}yx^{-1}+y^2x^{-1}y^{-1}x^{-1}+x^{-1}yx^{-1}+x^{-1}y^{-1}x^{-1}+x y^{-1}x^{-1}\\
&=&\Big(\big((1+y^2)x^{-1}\big)^2+1\Big)xy^{-1}x^{-1}=(R_2^2+1)R_1^{-1}C^{-1}
\end{eqnarray*}
\end{example}

An alternative formulation uses
the following transfer matrix, with non-commutative entries indexed by $0,1,2,3$:
\begin{equation}
T=\begin{pmatrix} 0 & 1 & 0 & 0\\
t y_1 & 0 & 1 & 0\\
0 & t y_2 & 0 & 1\\
0 & 0 & t y_3 & 0 \end{pmatrix}
\end{equation}
The matrix element $T_{i,j}$ is nothing but the non-commutative weight of the step $i\to j$
of the above paths. Consequently, $(T^n)_{i,j}$ is the partition function for paths of 
$n$ steps, from $i$ to $j$. The partition function for paths from $0\to 0$ is therefore
$$
F(t)R_0^{-1}=\Big(\sum_{n=0}^\infty T^n \Big)_{0,0} =\Big(( I - T)^{-1}\Big)_{0,0}.
$$
There is a direct link between this formulation and 
the continued fraction expression \eqref{contfrac}. The latter is obtained by Gaussian elimination of the matrix $I-T$, by allowing only
multiplication from the left and addition of rows (left multiplication by an upper triangular matrix), resulting in a
lower triangular matrix. $F(t)R_0^{-1}$ is then computed as the inverse of the first diagonal element in the resulting matrix,
leading to equation \eqref{contfrac}. 

Both formulations display explicitly the
positivity of $R_n$ as a Laurent polynomial of $x,y$. We have

\begin{thm}\label{antiauto}
For any $n\in \Z$, the solution $R_n$ of  \eqref{ncq}
is a Laurent polynomial of $x,y$ with non-negative integer coefficients $\in \{0,1\}$. Each Laurent monomial in the expression for $R_n$ corresponds to the weight of a single path on the segment $[0,3]$ with weights as in Theorem \ref{paths}.
\end{thm}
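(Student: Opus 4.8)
\emph{Strategy.} By Corollary \ref{positcor} and the remark that follows it, $R_n$ is already known to be a Laurent polynomial in $x,y$ with non-negative integer coefficients for every $n\in\Z$; the new content of the statement is that each coefficient is at most $1$ and that each Laurent monomial occurring in $R_n$ is the weight of a single path. I will prove this for $n\ge 0$ from the path model of Theorem \ref{paths}, and transfer it to $n<0$ by the symmetry \eqref{symthm}.

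\emph{Reduction to injectivity.} For $n\ge 0$, Theorem \ref{paths} gives $R_n=\big(\sum_p w(p)\big)R_0$, where $p$ runs over the nearest-neighbour paths on $[0,3]$ from $0$ to $0$ of length $2n$ and $w(p)$ is the left-to-right product of the step weights \eqref{positvals}. Each $y_i$ and $R_0$ is a positive Laurent monomial — a single reduced word in $x^{\pm1},y^{\pm1}$ — so each $w(p)R_0$ is again such a monomial, occurring in $R_n$ with coefficient $+1$; hence the coefficient of a given Laurent monomial $m$ in $R_n$ equals $\#\{p: w(p)R_0=m\}$ (distinct reduced words being linearly independent in $\mathbb F$, as is used throughout). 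Since right multiplication by $R_0$ is injective, it suffices to show that the map $p\mapsto w(p)$ is injective.

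\emph{Encoding and the decoding lemma.} As in the proof of Theorem \ref{paths}, such a path $p$ is the same data as an admissible word $(i_1,\dots,i_n)$ with $i_j\in\{1,2,3\}$, $i_n=1$, and $i_{j+1}\ge i_j-1$ — the heights of the successive descending steps of $p$ — and then $w(p)=y_{i_1}y_{i_2}\cdots y_{i_n}$; moreover every suffix of an admissible word is admissible. The crux is the following decoding lemma, which I would prove by induction on $n$: for an admissible word of length $n\ge 1$, the reduced form of $w=y_{i_1}\cdots y_{i_n}$ begins with $y^2$ if $i_1=1$, with $x^{-1}$ if $i_1=2$, and with $x$ if $i_1=3$. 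For the inductive step one writes $w=y_{i_1}\cdot w'$ with $w'=y_{i_2}\cdots y_{i_n}$ the (admissible, strictly shorter, possibly empty) suffix; the inductive hypothesis fixes the leading letters of $w'$ in terms of $i_2$, and since $y_1=y^2x^{-1}y^{-1}$, $y_2=x^{-1}y^{-1}$, $y_3=xy^{-1}$ all end in $y^{-1}$, the concatenation reduces by at most one step — a trailing $y^{-1}$ of $y_{i_1}$ against a leading $y$ of $w'$, and only when $i_2=1$ — so that the leading $y^2$, $x^{-1}$ or $x$ of $y_{i_1}$ always survives; the finitely many $(i_1,i_2)$-cases are checked directly. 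Given the lemma, $i_1$ is read off from $w$, then $w'=y_{i_1}^{-1}w$ is again an admissible word, and iterating recovers $(i_1,\dots,i_n)$ completely. Thus $p\mapsto w(p)$ is injective, and for $n\ge 0$ the solution $R_n$ is a Laurent polynomial in $x,y$ with coefficients in $\{0,1\}$, each monomial being the weight of a unique path.

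\emph{The case $n<0$ and the main difficulty.} Specializing \eqref{symthm} to $b=c=2$ gives $R_{-n}=(R_{n+1})^{*}$; the anti-automorphism $*$ is an involution of $\mathbb F$ sending positive Laurent monomials to positive Laurent monomials, hence it carries the $\{0,1\}$-coefficient Laurent polynomial $R_{n+1}$ (with $n+1\ge 1$) to another one, with its monomials in bijection, via $*$, with the paths computing $R_{n+1}$. The only delicate point in the whole argument is the decoding lemma: one must verify that prepending $y_{i_1}$ never produces a cancellation deep enough to destroy the ``signature'' $y^2$, $x^{-1}$ or $x$ of $i_1$ in the reduced word — which is precisely why the precise statement about the first letters of $w$ has to be propagated along the induction rather than established in one stroke.
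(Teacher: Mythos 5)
Your proof is correct and follows the same overall architecture as the paper's: the path model of Theorem \ref{paths} for $n\geq 0$, injectivity of the map from paths to total weights, and the anti-automorphism $*$ of \eqref{symthm} to transfer the statement to $n<0$. The one place where you genuinely diverge is the injectivity step. The paper disposes of it in one line by changing alphabet: since $(R_0,R_1)=(yxy^{-1},y)$ is again a free generating pair, it suffices to observe that each weight is a monomial $R_1^{\pm1}R_0^{\pm1}$ with the three distinct sign patterns $(+,-)$, $(-,-)$, $(-,+)$ for $y_1,y_2,y_3$, so that the concatenation $y_{i_1}\cdots y_{i_n}$ is already a reduced word in $R_0^{\pm1},R_1^{\pm1}$ (an $R_0^{\pm1}$ is always followed by an $R_1^{\pm1}$, so no cancellation ever occurs) and the sequence of signs recovers $(i_1,\dots,i_n)$ with no induction. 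You instead work directly in the alphabet $x^{\pm1},y^{\pm1}$ of \eqref{positvals}, where cancellations do occur, and control them with your decoding lemma; the finitely many $(i_1,i_2)$ cases check out and the induction closes as you claim, the crucial point being that when $i_2=1$ the suffix begins with $y^2$, so the single cancellation against the trailing $y^{-1}$ of $y_{i_1}$ still leaves a $y$ that blocks any further reduction. Your route costs an induction that the paper's choice of coordinates makes unnecessary, but it is fully rigorous and in fact supplies the verification that the paper's phrase ``this gives a bijection'' leaves implicit; your explicit treatment of the $\{0,1\}$ coefficients for $n<0$ via injectivity of $*$ is likewise slightly more careful than the paper's.
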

\begin{proof}
The expression for $R_n$, $n\geq 0$, as the path
partition function of Theorem \ref{paths} times $R_0$ is a manifestly positive
Laurent polynomial of $x,y$. From \eqref{symthm}, we deduce the positive
Laurent polynomiality for all $n\in\Z$.

Moreover, 
as the weights $y_i$ all have the form 
$R_1^{\pm1} R_0^{\pm1}$, they do not commute with each-other. 
Each path contributes an ordered 
product of such weights from left to right, 
which is encoded in the succession of down steps (taken towards the
origin) along the path. This gives a bijection between the paths contributing to $R_n$
and their total weights, which therefore occur exactly once in the expression
of $R_nR_0^{-1}$. The last part of the Theorem follows.
\end{proof}

\section{The cases $(b,c)=(1,4)$ and $(4,1)$.}\label{onefour}

The non-commutative $(1,4)$ recursion relations \eqref{mutation} can be written as:
\begin{eqnarray}\label{qsysonefour}
R_{2n}C R_{2n-2}&=&1+R_{2n-1}\nonumber \\
R_{2n+1}CR_{2n-1}&=&1+(R_{2n})^4 
\end{eqnarray}
As explained in Section \ref{symms}, we consider two different sets of
initial conditions: $(x,y)=(CR_0,R_1)$, where
\begin{equation}\label{initio}
R_0=yxy^{-1}, \qquad R_1=y
\end{equation}
and $(X,Y)=(CR_1,R_2)$, where
\begin{equation}\label{initb}
R_1=YXY^{-1}, \qquad R_2=Y
\end{equation}

We proceed as in \cite{cluster4}. 
Define $u_n=R_{2n}$, then odd index variables can be eliminated:
$$
R_{2n+1}=u_{n+1}C u_n -1
$$
The variables $u_n$ satisfy
\begin{equation}\label{ncu}
(u_{n+2}Cu_{n+1}-1)C(u_{n+1}C u_n -1)=1+u_{n+1}^4.
\end{equation}
The initial data corresponding to \eqref{initio} becomes
\begin{equation}\label{initu}
u_0=yxy^{-1}, \qquad u_1=(1+y)x^{-1},
\end{equation}
whereas that corresponding to \eqref{initb} becomes
\begin{equation}\label{initub}
u_0=XY^{-1}X^{-1}Y(1+X)Y^{-1}, \qquad u_1=Y.
\end{equation}

\subsection{Conserved quantities and linear recursions}

We use the following expressions for the commutator $C$, expressed as a function of $u_n$ :
\begin{eqnarray}
 C&=&u_{n+1}^{-1}(u_{n+1}C u_n -1)u_{n+1}(u_{n+1}C u_n -1)^{-1}\label{firstconsa}\\
C&=&(u_{n+1}C u_n -1)^{-1}u_n(u_{n+1}C u_n -1)u_n^{-1}\label{firstconsb}
\end{eqnarray}
The first expression is obtained from  $C=R_{2n+2}^{-1}R_{2n+1}R_{2n+2}R_{2n+1}^{-1}$,
and  the second from $C=R_{2n+1}^{-1}R_{2n}R_{2n+1}R_{2n}^{-1}$.

Starting from equation \eqref{ncu}, let us substitute \ref{firstconsa} for the term $C$ in the center 
of the left hand side,
$$
(u_{n+2}Cu_{n+1}-1)u_{n+1}^{-1}(u_{n+1}C u_n -1)u_{n+1}=1+u_{n+1}^4,
$$
or
$$u_{n+2}C(u_{n+1}C u_n -1)=u_{n+1}^3+Cu_n.
$$
We conclude that
\begin{equation}\label{newnc}
u_{n+2}C=(u_{n+1}^3+Cu_{n})(u_{n+1}C u_n -1)^{-1}
\end{equation}
By Eq.\eqref{firstconsb}, we also have
\begin{equation}\label{rear}
(u_{n+1}Cu_n-1)C=u_n u_{n+1}C -1
\end{equation}
which is a quasi-commutation relation between $u_n$ and
$u_{n+1}$:
\begin{equation}\label{qcomu}
u_{n+1}Cu_n =u_n u_{n+1} +1-C^{-1}.
\end{equation}
This is to be compared with equation \eqref{qcomm} of the case $b=c=2$ above.

We can now prove the integrability of the evolution \eqref{ncu}, by finding its conserved quantity.
\begin{lemma}\label{secons}
The function
\begin{equation}\label{seconserved}
K=\big( (Cu_n)^2+(u_{n+1})^2\big)(u_{n+1}C u_n -1)^{-1}
\end{equation}
is independent of $n$.
\end{lemma}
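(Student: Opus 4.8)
The plan is to mimic the strategy of Lemma~\ref{conster}: show that $K$, as defined in \eqref{seconserved}, equals an alternative expression $K'$ that is manifestly the same function with $n$ shifted, and then exploit the recursion \eqref{ncu} (equivalently \eqref{newnc}) to conclude $K_{n+1}=K_n$. Concretely, I would first record the two ``linearized'' forms of the evolution that should come out of substituting the conservation laws: from \eqref{newnc} we have $u_{n+2}C(u_{n+1}Cu_n-1)=u_{n+1}^3+Cu_n$, and I expect a companion identity obtained by instead substituting \eqref{firstconsb} into the \emph{left} factor of \eqref{ncu}, producing something like $(u_{n+1}Cu_n-1)u_n^{-1}(u_n C u_{n-1}-1)u_n = 1+u_n^4$ rearranged into a relation expressing $(u_{n+1}Cu_n-1)$ in terms of $u_{n-1},u_n$. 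These two ``halves'' of \eqref{ncu} are the analogues of $K_n$ and $L_n$ in \eqref{linKL}.

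The key computational step is then the following. Write $A_n=u_{n+1}Cu_n-1=R_{2n+1}$ for brevity (only in the scratch work; I would not introduce this notation in the final text). Then $K$ in \eqref{seconserved} is $K=\big((Cu_n)^2+u_{n+1}^2\big)A_n^{-1}$. Using the quasi-commutation relation \eqref{qcomu}, $u_{n+1}Cu_n=u_nu_{n+1}+1-C^{-1}$, i.e. $A_n = u_nu_{n+1}-C^{-1}$, one can move factors of $C$ past the $u$'s in controlled fashion. I would compute $K - \text{(shifted }K\text{)}$ by multiplying through on the left by $A_n = u_{n+1}Cu_n-1$ and on the right by $A_{n-1}=u_nCu_{n-1}-1$ (clearing both denominators), so that $A_n(K_n - K_{n-1})A_{n-1}$ becomes a polynomial identity in $u_{n-1},u_n,u_{n+1},C$ and $C^{-1}$. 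Substituting \eqref{newnc} (to eliminate $u_{n+1}$ in terms of $u_n,u_{n-1}$) and using \eqref{qcomu} repeatedly to normal-order, this difference should collapse to zero, exactly as the subtraction of \eqref{ncq} at consecutive indices did in Lemma~\ref{conster}. The cleanest route is probably to first prove the intermediate equality
\begin{equation*}
\big((Cu_n)^2+u_{n+1}^2\big)A_n^{-1} = A_n^{-1}\big(u_n^2 + (C u_{n+1})^2\big) \quad\text{or a similar ``$K=L$'' swap},
\end{equation*}
using \eqref{firstconsa}–\eqref{firstconsb} to rewrite one central $C$, and only then use the evolution to shift $n$.

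The main obstacle is purely the non-commutative bookkeeping: unlike the $b=c=2$ case, here $C$ is genuinely non-central and does not commute with the $u_n$, so every rearrangement must be justified through \eqref{qcomu} (or its consequence \eqref{rear}), and one cannot simply treat $C$ as a scalar. In particular, care is needed in deciding on which side each $C$ sits, since $(Cu_n)^2 = Cu_nCu_n \ne C^2u_n^2$ in general, and the inverse $A_n^{-1}$ must be handled as a formal element of $\mathbb F$ rather than expanded. I expect that the correct statement and the correct placement of the $C$'s in \eqref{seconserved} is precisely what makes the telescoping work, so the ``discovery'' part — verifying that the given $K$, with $C$'s exactly where they are written, is the conserved one — is the crux; once the right normal-ordering identities are in hand, the cancellation is forced by \eqref{newnc} just as in the proof of Lemma~\ref{conster}.
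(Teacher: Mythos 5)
Your plan is essentially the paper's proof: the paper sets $K_n=u_{n+1}^{-1}(u_{n+2}C+u_n)$, identifies it with \eqref{seconserved} via \eqref{newnc}, and then shows $K_n-K_{n-1}=0$ by exactly the route you describe --- using the commutation identities coming from \eqref{firstconsa}--\eqref{firstconsb} to pull $(u_{n+1}Cu_n-1)^{-1}$ to the left, then invoking \eqref{rear} and \eqref{newnc} at index $n-1$ to force the cancellation. The only missing piece is the actual execution of the normal-ordering computation (which goes through as you predict); note that the correct ``swap'' identity is $\big((Cu_n)^2+u_{n+1}^2\big)(u_{n+1}Cu_n-1)^{-1}=(u_{n+1}Cu_n-1)^{-1}\big(u_n^2+(u_{n+1}C)^2\big)$, with $(u_{n+1}C)^2$ rather than $(Cu_{n+1})^2$.
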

\begin{proof}
Let
$$
K_n=u_{n+1}^{-1}(u_{n+2}C+u_{n})
$$
Using \eqref{newnc},
\begin{equation}\label{defK}
K_n=\big( (Cu_n)^2+(u_{n+1})^2\big)(u_{n+1}C u_n -1)^{-1}
\end{equation}
To prove that $K_n$ is independent of $n$, we compute
\begin{equation}\label{debut}
K_n-K_{n-1}=\big( (Cu_n)^2+(u_{n+1})^2\big)
(u_{n+1}C u_n -1)^{-1}-u_{n}^{-1}(u_{n+1}C+u_{n-1})
\end{equation}
We first need to move the factor $(u_{n+1}C u_n -1)^{-1}$ to the left of the first term. 
To do so, note that \eqref{firstconsb}, 
implies that for all $m\geq 0$
$$
(Cu_n)^m(u_{n+1}C u_n -1)^{-1}=(u_{n+1}C u_n -1)^{-1} (u_n)^m,
$$
and  \eqref{firstconsa}, 
implies that for all $m\geq 0$
$$
(u_{n+1})^m(u_{n+1}C u_n -1)^{-1}=(u_{n+1}C u_n -1)^{-1} (u_{n+1}C)^m.
$$
Applying this to the first term in \eqref{debut}:
\begin{eqnarray*}
K_n-K_{n-1}&=&(u_{n+1}C u_n -1)^{-1}\big( (u_n)^2+(u_{n+1}C)^2
-(u_{n+1}C u_n -1)u_n^{-1}(u_{n+1}C+u_{n-1}) \big)\\
&=&(u_{n+1}C u_n -1)^{-1}u_n^{-1}\big( (u_n)^3+u_{n+1}C-(u_nu_{n+1}C-1)u_{n-1}
\big)
\end{eqnarray*}
Substituting Eq. \eqref{rear} in the last term,
\begin{eqnarray*}
K_n-K_{n-1}&=&(u_{n+1}C u_n -1)^{-1}u_n^{-1}
\big((u_n)^3+u_{n+1}C-(u_{n+1}Cu_n-1)Cu_{n-1}\big) \\
&=&(u_{n+1}C u_n -1)^{-1}u_n^{-1}\big(u_n^3+Cu_{n-1}
-u_{n+1}C(u_nC u_{n-1} -1)\big) \\
&=&0
\end{eqnarray*}
as a consequence of \ref{newnc}.
The lemma follows.
\end{proof}

By the definition \eqref{defK} of $K_n$, we deduce:

\begin{thm}\label{linreconefour}
The solution $u_n$ to the system \eqref{ncu} satisfies the following 
linear recursion relations with constant coefficients:
\begin{eqnarray}\label{linonefour}
&& u_{n+2}C-u_{n+1}K+u_n=0, \\
&& u_{n+2}-Ku_{n+1}+Cu_n=0,\label{secrec}
\end{eqnarray}
where 
\begin{equation}
C=x y x^{-1}y^{-1}, \qquad K= (x^2+\big((1+y)x^{-1}\big)^2)y^{-1}\label{valcons}
\end{equation}
in the case of the initial data $(x,y)$ as in \eqref{initu}, or
\begin{equation}
C=X Y X^{-1}Y^{-1}, \qquad K= (Y^2+\big((1+X)Y^{-1}\big)^2)YX^{-1}Y^{-1}\label{valconsb}
\end{equation}
in the case of the initial data $(X,Y)$ as in \eqref{initub}.
\end{thm}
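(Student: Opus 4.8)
The statement has two parts: (1) the recursions \eqref{linonefour} and \eqref{secrec}, and (2) the explicit evaluation of the constants $C$ and $K$ on each of the two seeds. For part (1), I would argue exactly as in the preceding lemma. The first recursion \eqref{linonefour} is just the equation $K_n=K$ rewritten using the defining formula $K_n=u_{n+1}^{-1}(u_{n+2}C+u_n)$: multiplying $K_n=K$ on the left by $u_{n+1}$ gives $u_{n+2}C+u_n=u_{n+1}K$, which is \eqref{linonefour}. The second recursion \eqref{secrec} should come from the ``other'' linear form of the conserved quantity, in the same way that \eqref{recur} and \eqref{recul} were the two faces of $K_n=L_n$ in the $b=c=2$ case: one writes down $L_n=(\text{something})u_n^{-1}$ built to equal $K_n$ after using the two forms \eqref{firstconsa}--\eqref{firstconsb} of the commutator, and then the identity $L_n=K$, cleared of denominators, is \eqref{secrec}. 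Concretely I expect $L_n$ to be $( u_{n+2}+C u_n \cdot(\text{conjugate}))u_{n+1}^{-1}$-type, and the passage between the two recursions to be precisely a rewriting using the quasi-commutation \eqref{qcomu}. Alternatively, and perhaps more cleanly, \eqref{secrec} follows from \eqref{linonefour} by conjugating by $C$ and using \eqref{qcomu}: the point is that $C$ is conserved, so moving it past the $u_j$'s is governed by a fixed rule.

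For part (2), I would simply evaluate $K_n=\big((Cu_n)^2+(u_{n+1})^2\big)(u_{n+1}Cu_n-1)^{-1}$ at $n=0$ on each seed. On the first seed \eqref{initu} we have $u_0=yxy^{-1}$, $u_1=(1+y)x^{-1}$, and $C=xyx^{-1}y^{-1}$, so $Cu_0=xyx^{-1}y^{-1}\cdot yxy^{-1}=xyx^{-1}\cdot xy^{-1}$; here one checks $Cu_0 = x$ by direct cancellation (indeed $C u_0 = C R_0 = x$ by the very definition $R_0=Cx$... wait, $x=CR_0$ so $CR_0=x$ means $C u_0 = x$ since $u_0=R_0$). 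Thus $(Cu_0)^2=x^2$ and $(u_1)^2=\big((1+y)x^{-1}\big)^2$. It then remains to verify that the denominator $u_1Cu_0-1$ equals $y$: since $u_1Cu_0-1=R_1CR_0-R_{-1}\cdot(\dots)$ — more simply, $u_1Cu_0 = R_2 C R_0 + \dots$, but the cleanest route is $u_1 C u_0-1 = R_1$ by \eqref{qsysonefour} evaluated appropriately, or directly: $R_1 = u_1 C u_0 - 1$ is exactly the elimination formula $R_{2n+1}=u_{n+1}Cu_n-1$ at $n=0$, and $R_1=y$. Hence $K=\big(x^2+((1+y)x^{-1})^2\big)y^{-1}$, which is \eqref{valcons}. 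For the second seed \eqref{initub} the computation is structurally identical with $(X,Y)$ in place of $(x,y)$: $u_0=XY^{-1}X^{-1}Y(1+X)Y^{-1}$, $u_1=Y$, $Cu_0 = (1+X)Y^{-1}$ wait — I need to recompute, but the mechanism is the same: $C u_0$ and $u_1 C u_0 -1$ are identified via the $R$-variable dictionary, giving $K=\big(Y^2+((1+X)Y^{-1})^2\big)YX^{-1}Y^{-1}$ as in \eqref{valconsb}.

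The main obstacle is bookkeeping the non-commutative cancellations in part (1): deriving \eqref{secrec} requires choosing the right ``left'' form $L_n$ of the conserved quantity and checking it agrees with $K_n$ using \eqref{firstconsa}--\eqref{firstconsb}, and then making sure the denominators clear to give a polynomial (not merely rational) recursion — this is the step where a sign or an ordering error is easiest to make. I would handle it by mimicking the $b=c=2$ derivation of \eqref{recur}/\eqref{recul} line by line, and cross-check the final \eqref{secrec} by conjugating \eqref{linonefour} by $C$ and applying the quasi-commutation \eqref{qcomu}, so that the two recursions are manifestly equivalent modulo the first conserved quantity, just as in Section \ref{twotwo}. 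Part (2) is then a short and safe substitution once the $R$-to-$u$ dictionary $R_{2n+1}=u_{n+1}Cu_n-1$ and $Cu_0=x$ (resp.\ $Cu_1=X$) are in hand.
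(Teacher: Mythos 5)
Your proposal is correct and takes essentially the same route as the paper: the first recursion is read off from $K_n=K$ with $K_n=u_{n+1}^{-1}(u_{n+2}C+u_n)$ as in \eqref{defK}, the second is obtained from the first precisely via the quasi-commutation \eqref{qcomu} (the paper multiplies \eqref{linonefour} on the right by $u_{n+1}$ and applies \eqref{qcomu} to both $u_{n+2}Cu_{n+1}$ and $u_nu_{n+1}$ --- this is the rewriting you anticipate, though it is not literally a conjugation by $C$), and the constants \eqref{valcons}--\eqref{valconsb} come from substituting the seeds, using $Cu_0=x$, $u_1=(1+y)x^{-1}$, $u_1Cu_0-1=R_1=y$ in the first case and $Cu_0=(1+X)Y^{-1}$, $u_1=Y$, $u_1Cu_0-1=R_1=YXY^{-1}$ in the second. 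Your parenthetical ``$Cu_1=X$'' at the end is a slip ($X=CR_1$ and $R_1$ is odd-indexed, so it is not a $u$-variable), but your actual computation does not rely on it.
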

\begin{proof}
The first relation follows from the definition of $K=K_n$ \eqref{defK}.
The second follows from the first and from the quasi-commutation relation \eqref{qcomu},
\begin{eqnarray*}
u_{n+1}Ku_{n+1}-u_{n+1}( u_{n+2}+Cu_n)&=&
u_{n+2}Cu_{n+1}+u_{n}u_{n+1} -u_{n+1}( u_{n+2}+Cu_n)\\
&=&(1-C^{-1})-(1-C^{-1})=0
\end{eqnarray*}
Substituting
the initial values \eqref{initu}-\eqref{initub} into the expressions for the conserved quantities
$C$ and $K$ leads to \eqref{valcons}-\eqref{valconsb}.
\end{proof}

\subsection{(x,y) initial data: Paths with noncommutative weights and positivity}

As in Section \ref{twotwopath}, we introduce the generating function
$F(t)=\sum_{n\geq 0} t^n u_n$, and use the second linear recursion relation of Theorem
\ref{linreconefour} to compute $F$ explicitly.

\begin{thm}\label{contionefour}
The generating function $F(t)$ has the following non-commutative (finite) continued
fraction expression:
\begin{equation}\label{contfraconefour}
F(t)=\left(1-t y_1 -t^2(1-t y_3)^{-1} y_2\right)^{-1} u_0, \end{equation}
where
\begin{eqnarray*}
y_1&=&u_1 u_0^{-1}=(1+y)x^{-1}yx^{-1}y^{-1} , \\
y_2&=& (K-y_1)y_1 -C=\Big(x^2+(1+y)x^{-2}(1+y)\Big)y^{-1}x^{-1}y x^{-1}y^{-1}   , \\
y_3&=&K-y_1=\Big(x^3+(1+y)x^{-1}\Big)x^{-1}y^{-1} , \end{eqnarray*}
and $
u_0=y x y^{-1}.$
\end{thm}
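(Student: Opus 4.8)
The plan is to mirror the strategy used for Theorem~\ref{continued} in the case $b=c=2$: first turn the second linear recursion \eqref{secrec} of Theorem~\ref{linreconefour} into a closed form for $F(t)$, then rearrange it by elementary (non-commutative) algebra into the stated continued fraction. I would multiply $u_{n+2}-Ku_{n+1}+Cu_n=0$ by $t^{n+2}$ and sum over $n\geq 0$. Because the constant coefficients $K$ and $C$ multiply from the \emph{left}, the three resulting sums are $F(t)-u_0-tu_1$, $-tK\bigl(F(t)-u_0\bigr)$, and $t^2C\,F(t)$, so that $(1-tK+t^2C)F(t)=u_0+t(u_1-Ku_0)$, exactly the analogue of the identity $F(t)=(1-tK+t^2C)^{-1}(R_0-t(KR_0-R_1))$ used in the $b=c=2$ case.

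Next I would simplify the inhomogeneous term using the definitions of the weights. From $y_1=u_1u_0^{-1}$ and $y_3=K-y_1$ one gets $u_1=y_1u_0$ and $Ku_0=(y_1+y_3)u_0=u_1+y_3u_0$, whence $u_1-Ku_0=-y_3u_0$ and
$$F(t)=(1-tK+t^2C)^{-1}(1-ty_3)u_0.$$
Now I would substitute $K=y_1+y_3$ and $C=y_3y_1-y_2$ — the second being precisely the defining relation $y_2=(K-y_1)y_1-C$ — into the quadratic pencil. Matching the $t^2y_3y_1$ term fixes the order of the factors and gives $1-tK+t^2C=(1-ty_3)(1-ty_1)-t^2y_2$. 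Since $t$ (and hence $t^2$) is central while the $y_i$ are not, one may pull $t^2$ through $(1-ty_3)^{-1}$ and factor
$$1-tK+t^2C=(1-ty_3)\bigl((1-ty_1)-t^2(1-ty_3)^{-1}y_2\bigr).$$
Inverting this and cancelling the leftmost factor $(1-ty_3)$ against the trailing $(1-ty_3)u_0$ leaves $F(t)=\bigl(1-ty_1-t^2(1-ty_3)^{-1}y_2\bigr)^{-1}u_0$, which is the asserted formula.

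To finish, the explicit Laurent expressions for $y_1,y_2,y_3$ follow by substituting the initial data $u_0=yxy^{-1}$, $u_1=(1+y)x^{-1}$ of \eqref{initu} and the conserved-quantity values $C=xyx^{-1}y^{-1}$, $K=(x^2+((1+y)x^{-1})^2)y^{-1}$ of \eqref{valcons} into $y_1=u_1u_0^{-1}$, $y_3=K-y_1$ and $y_2=y_3y_1-C$. For example $y_1=(1+y)x^{-1}\cdot yx^{-1}y^{-1}$ is immediate, and in $K-y_1$ the combination $(1+y)x^{-1}\bigl((1+y)-y\bigr)x^{-1}y^{-1}$ collapses to $(1+y)x^{-2}y^{-1}$, yielding $y_3=(x^3+(1+y)x^{-1})x^{-1}y^{-1}$; the evaluation of $y_2=y_3y_1-C$ is the same kind of computation, slightly longer.

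The only real obstacle is bookkeeping rather than substance: one must keep the left/right ordering of the non-commuting weights straight throughout — notably in summing the recursion (coefficients act on the left) and in the factorization of the quadratic pencil — and one must use crucially that $t$, not the $y_i$, is the central variable, since that is exactly what licenses the cancellation of the $(1-ty_3)$ factors. Once this is respected, each step is a routine manipulation already carried out in the $b=c=2$ case, so I do not anticipate any genuine difficulty beyond careful computation.
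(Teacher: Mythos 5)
Your proposal is correct and follows essentially the same route as the paper: derive $(1-tK+t^2C)F(t)=(1-ty_3)u_0$ from the left-coefficient recursion \eqref{secrec}, then factor the quadratic pencil as $(1-ty_3)\bigl(1-ty_1-t^2(1-ty_3)^{-1}y_2\bigr)$ using $K=y_1+y_3$, $C=y_3y_1-y_2$, and cancel. The explicit evaluations of the weights also match the paper's.
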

\begin{proof}
The recursion relation \eqref{secrec} implies
\begin{eqnarray*}
F(t)&=& (1-t K+t^2 C)^{-1}\big(1-t(K-u_1u_0^{-1})\big) u_0\\
&=& (1-t(y_1+y_3)+t^2(y_3 y_1-y_2))^{-1}\big(1-ty_3\big) u_0\\
&=& (1-ty_1-t^2(1-t y_3)^{-1}y_2))^{-1}u_0
\end{eqnarray*}
where we have substituted $y_1=u_1u_0^{-1}$, $y_3=K-y_1$, and $y_2=y_3y_1-C$.
\end{proof}

\begin{cor}\label{coronefour}
For all $n\geq 0$, the solution $u_n$ of the system (\ref{ncu}-\ref{initu}) is a
Laurent polynomial of $x,y$ with only non-negative integer coefficients.
\end{cor}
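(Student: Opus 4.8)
The plan is to read off positivity directly from the continued-fraction expression \eqref{contfraconefour} for $F(t)$ by interpreting it as a path-generating function with non-commutative weights, exactly as in Theorems \ref{paths} and \ref{antiauto}. First I would verify that the three weights $y_1,y_2,y_3$ computed in Theorem \ref{contionefour} are genuine positive Laurent polynomials in $x,y$: the displayed formulas already exhibit $y_1=(1+y)x^{-1}yx^{-1}y^{-1}$, $y_3=\bigl(x^3+(1+y)x^{-1}\bigr)x^{-1}y^{-1}$, and $y_2=\bigl(x^2+(1+y)x^{-2}(1+y)\bigr)y^{-1}x^{-1}yx^{-1}y^{-1}$, each of which is manifestly a sum of Laurent monomials in $x,y$ with coefficients in $\Z_{\geq0}$; likewise $u_0=yxy^{-1}$. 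Since products and sums of positive Laurent polynomials are positive Laurent polynomials, it only remains to show that every coefficient of $t^n$ in $F(t)$ is a positive Laurent-polynomial combination of $y_1,y_2,y_3$ and a right factor $u_0$.

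Next I would expand the continued fraction \eqref{contfraconefour} as a formal power series in $t$ by the nested-geometric-series argument used in the proof of Theorem \ref{paths}. Writing $G_1 = \bigl(1 - t y_1 - t^2(1-ty_3)^{-1}y_2\bigr)^{-1}$ and $G_2 = (1-ty_3)^{-1}$, each inverse is expanded as $\sum_{k\geq0}(\,\cdot\,)^k$, so that every coefficient of $t^n$ in $G_1$ is a finite $\Z_{\geq0}$-linear combination of words in $y_1,y_2,y_3$ (with the $y_i$'s multiplied in the order dictated by the expansion, left to right). This is precisely the partition function for paths on the barbell graph $G$ of Figure \ref{fig:barbell} from node $0$ to itself: a step $0\to0$ carries weight $ty_1$, a step $0\to1$ weight $t$, a step $1\to0$ weight $ty_2$, and a step $1\to1$ weight $ty_3$, with the word read off in order of traversal. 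Hence $F(t)=\bigl(\sum_{\text{paths } 0\to0} \operatorname{wt}(\text{path})\bigr)u_0$, and the coefficient of $t^n$ is $u_n$ by Theorem \ref{contionefour}, so $u_n$ equals (partition function of length-$n$ paths)$\cdot u_0$.

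Then positivity is immediate: $u_n$ is a finite sum of words of the form $y_{i_1}y_{i_2}\cdots y_{i_\ell}\,u_0$ with non-negative integer coefficients, and each such word, being a product of positive Laurent polynomials in $x,y$, is itself a positive Laurent polynomial in $x,y$; summing preserves this. Therefore $u_n$ is a Laurent polynomial in $x,y$ with coefficients in $\Z_{\geq0}$, which is the assertion of Corollary \ref{coronefour}. (One should also note that, unlike the $(2,2)$ case of Theorem \ref{antiauto}, here the coefficients need not lie in $\{0,1\}$ since the weights $y_2,y_3$ are themselves multi-term polynomials, but this does not affect the positivity statement.)

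\textbf{Main obstacle.} The only non-routine point is the bookkeeping of the \emph{order} of the non-commutative factors in the power-series expansion of the nested inverse: one must check that expanding $(1-ty_1-t^2(1-ty_3)^{-1}y_2)^{-1}$ as $\sum_k\bigl(ty_1+t^2(1-ty_3)^{-1}y_2\bigr)^k$ and then expanding the inner $(1-ty_3)^{-1}=\sum_j t^j y_3^{\,j}$ really does reproduce the left-to-right path-weight ordering of Figure \ref{fig:barbell}, with no hidden reorderings — i.e. that all manipulations use only left multiplication and that the $u_0$ stays on the far right throughout. This is the same combinatorial identity that underlies the proof of Theorem \ref{paths}, adapted from the interval graph $[0,3]$ to the barbell graph; once it is in place, everything else is a routine check that the explicit $y_i$ and $u_0$ are positive Laurent polynomials.
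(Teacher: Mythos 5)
Your proposal is correct and follows essentially the same route as the paper: Corollary \ref{coronefour} is an immediate consequence of Theorem \ref{contionefour}, since the weights $y_1,y_2,y_3$ and $u_0$ are manifestly positive Laurent polynomials of $x,y$ and the nested geometric expansion of the continued fraction produces only non-negative integer combinations of left-to-right ordered words in these weights. The path/barbell-graph reading you give is exactly the content of the paper's Theorem \ref{pathonefour}, so there is nothing to add.
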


As in Section \ref{twotwopath}, the continued fraction expression \eqref{contfraconefour}
allows to interpret $R_n$ as a path partition function for all $n\geq 0$. The new feature is
that the paths involved here will be Motzkin paths of height $1$. That is, they are paths 
on the strip of $\Z^2$ delimited by $y=0$ and $y=1$, with steps $(x,y)\to (x+1,1-y)$ or 
$(x,y)\to(x+1,y)$. These are in bijection with paths on the graph $G$ of Figure 
\ref{fig:barbell}, if we consider horizontal steps at height $y$ to be steps around 
the loop connecting vertex $y$ to itself, and diagonal steps to up/down steps 
along the edge connecting vertex 0 and 1. The Motzkin paths of length $3$ are represented
in Figure \ref{fig:barbell} for illustration.

\begin{thm}\label{pathonefour}
For all $n\in \Z_{\geq 0}$, $u_nu_0^{-1}$ is the partition function for Motzkin paths of height 1, from $(0,0)$ to $(n,0)$, with weights
$y_1$ per horizontal step at height 0, weight $1$ for an upward step, weight$y_2$ per downward step and weight $y_3$ per horizontal step
at height 1, with $y_i$ as in equation \eqref{contfraconefour}.
\end{thm}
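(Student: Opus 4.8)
The plan is to proceed exactly as in the proof of Theorem \ref{paths} from the $b=c=2$ case, expanding the finite continued fraction \eqref{contfraconefour} of Theorem \ref{contionefour} into a power series in $t$ and identifying each coefficient with a sum over Motzkin paths of height $1$. Concretely, I would write the continued fraction as a nested pair of scalar-type recursions, introducing
\begin{eqnarray*}
G &=& (1-t y_3)^{-1},\\
F(t)u_0^{-1} &=& \left(1 - t y_1 - t^2 G\, y_2\right)^{-1}.
\end{eqnarray*}
The inner factor $G=\sum_{k\geq 0}t^k y_3^k$ is manifestly the partition function for paths that stay at height $1$, taking $k$ horizontal steps of weight $y_3$; note the weights $y_3$ are multiplied left to right in the order the steps are taken, which is the required convention.

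Next I would expand the outer inverse as a geometric series,
$$
F(t)u_0^{-1} = \sum_{m\geq 0}\left(t y_1 + t^2 G\, y_2\right)^m,
$$
and interpret each factor in a word of length $m$ in the two "letters" $t y_1$ and $t^2 G y_2$ as a decomposition of a Motzkin path from $(0,0)$ to $(n,0)$ into its \emph{arches}: a maximal excursion at height $0$ contributes the letter $ty_1$ for each horizontal step there (these are absorbed into the geometric expansion of $(1-ty_1-\cdots)^{-1}$), while an up-step/down-step pair together with the intervening sojourn at height $1$ contributes the letter $t^2 G y_2$, i.e. weight $1$ for the up-step, the inner partition function $G$ for the height-$1$ horizontal steps in between, and weight $y_2$ for the down-step. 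Since a Motzkin path of height $\le 1$ from $(0,0)$ to $(n,0)$ is uniquely decomposed into a sequence of height-$0$ horizontal steps and such "arch with plateau" blocks, and the weight of the path is by definition the left-to-right product of its step weights, this decomposition matches term by term the expansion above. Extracting the coefficient of $t^n$ then gives precisely $u_n u_0^{-1}$ equal to the stated Motzkin-path partition function, and multiplying by $u_0$ yields $u_n$ itself.

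The main subtlety — and the one step requiring care rather than routine bookkeeping — is ordering: because $y_1,y_2,y_3$ are non-commuting Laurent monomials/polynomials, one must check that the factorization of the continued fraction into the nested geometric series reproduces the step weights in exactly the temporal order in which the path visits them, with no accidental transpositions. This is the non-commutative analogue of the verification in the proof of Theorem \ref{paths}, and it works because each geometric expansion is taken on the appropriate side ($G y_2$ grouped together as the "descending block", $y_1$ acting after the block that precedes it in time). One should also note, as in Corollary \ref{coronefour}, that the substituted values of $y_1,y_2,y_3$ and $u_0$ from \eqref{contfraconefour} are positive Laurent polynomials in $x,y$, so the path partition function, being a sum of products of these, is a manifestly positive Laurent polynomial; but that positivity statement is already Corollary \ref{coronefour} and is not what the present theorem asserts — here the content is purely the combinatorial path interpretation. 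An entirely equivalent route, which I would mention, is to encode the recursion by the $2\times 2$-indexed transfer matrix
$$
T=\begin{pmatrix} t y_1 & t \\ t y_2 & t y_3\end{pmatrix}
$$
on the vertex set $\{0,1\}$ of the barbell graph $G$, so that $(T^n)_{0,0}=u_n u_0^{-1}$ and $F(t)u_0^{-1}=\big((I-T)^{-1}\big)_{0,0}$, with \eqref{contfraconefour} obtained by Gaussian elimination on $I-T$ using only left multiplication; the Motzkin-path statement is then just the combinatorial reading of the matrix power.
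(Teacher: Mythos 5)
Your proposal is correct and follows essentially the same route as the paper: the paper's proof likewise splits the continued fraction \eqref{contfraconefour} into the inner factor $F_1=(1-ty_3)^{-1}$, read as the partition function for sojourns at height $1$, and the outer geometric series, read as a shuffle of height-$0$ horizontal steps and arches of the form (up-step, plateau, down-step), with the left-to-right ordering of the non-commutative weights matching the temporal order of the steps. The transfer-matrix remark you add also appears in the paper, immediately after the theorem.
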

Notice that formally, these are precisely the paths of length $n$ on the weighted graph $G$ in Figure \ref{fig:barbell}.
\begin{proof}
The expansion of the continued fraction $F$ \eqref{contfraconefour} in powers of $t$ 
may be again decomposed into two steps:
\begin{eqnarray*}
F_1&=&(1-t y_3)^{-1} \\
F&=& (1-t y_1 -t(F_1)(ty_2))^{-1} u_0
\end{eqnarray*}
We may now interpret $F_1$ as the partition function for paths on $\{1\}$
made of consecutive steps $1\to 1$ (each receiving the weight $t y_3$). 
As to $F$, it generates paths on $[0,1]$ from $0$ to $0$,
made of any shuffle of steps $0\to 0$ (with weight $ty_1$)
and segments made of one step $0\to 1$ (weight $t$) followed by any number of steps $1\to 1$ 
(weight $F_1$) and then one step $1\to 0$ (weight $ty_2$).
The Theorem follows.
\end{proof}

Alternatively, we may express the partition function $u_nu_0^{-1}$ by means of the
path transfer matrix 
$T= \begin{pmatrix} y_1 & 1\\ y_2 & y_3 \end{pmatrix}$
with entries indexed $0,1$, resulting in:
\begin{equation}
u_nu_0^{-1} =\left(T^n \right)_{0,0}
\end{equation}

\begin{example}
For $n=2$ we have
\begin{eqnarray*}
u_2u_0^{-1}
&=&(u_1^3+Cu_0)(u_1Cu_0-1)^{-1}C^{-1}u_0^{-1}= \big(x+((1+y)x^{-1})^3\big) x y^{-1}x^{-1}yx^{-1}y^{-1}\\
&=& y_1^2+y_2
\end{eqnarray*}
by using the explicit values of $y_1,y_2$ from equation \eqref{contfraconefour}. 
This is the contribution of the two Motzkin paths of length $2$ on $[0,1]$ starting and ending at $0$,
namely $0\to 0\to 0$ (weight $y_1^2$) and $0\to 1\to 0$ (weight $1\times y_2$).
\end{example}

Let us now turn to the remaining variables 
$R_{2n+1}=u_{n+1}Cu_n-1=u_nu_{n+1}-C^{-1}$. From the Laurent 
positivity result for $u_n$, it is easy to deduce that of $R_{2n+1}$. We simply have
to show that the term $C^{-1}$ occurs at least once in the product $u_nu_{n+1}$.
We have

\begin{lemma}\label{containC}
For $n\geq 1$, the expression of $u_nu_{n+1}$ as a Laurent polynomial of $x,y$ contains
the term $C^{-1}$.
\end{lemma}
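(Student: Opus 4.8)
The plan is to track a single, explicit Laurent monomial through the path-model expression for $u_n u_{n+1}$ and show it equals $C^{-1}$. Concretely, Theorem \ref{pathonefour} gives $u_n u_0^{-1}$ as a sum over Motzkin paths of height $1$, with weights $y_1,1,y_2,y_3$; multiplying by $u_0$ and then by $u_{n+1}$ produces $u_n u_{n+1} = u_n u_0^{-1}\,(u_0 u_{n+1})$, and I would first rewrite $u_0 u_{n+1}$ using the quasi-commutation relation \eqref{qcomu} in the form $u_0 u_{n+1} = u_{n+1}Cu_0 - 1 + C^{-1}\cdot(\text{correction})$ — more usefully, I would instead write $u_n u_{n+1} = (u_n u_0^{-1})(u_0 u_0^{-1})\cdots$, i.e. directly exploit $u_{n+1}u_0^{-1} = (T^{n+1})_{0,0}$ and $u_n u_0^{-1} = (T^n)_{0,0}$ so that $u_n u_{n+1} = (T^n)_{0,0}\, u_0\, (T^{n+1})_{0,0}$. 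The key structural input is that $u_0 = yxy^{-1}$ and the conjugation-by-$u_0$ rules following from \eqref{firstconsa}--\eqref{firstconsb}, which let me move $u_0$ past the second transfer-matrix factor at the cost of replacing $y_i$ by conjugated weights.

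The heart of the argument is to exhibit one path (or pair of paths) whose weight is exactly $C^{-1}$. The natural candidate is the ``lowest'' path that stays at height $0$ for a while and then makes one excursion to height $1$: in $u_n u_0^{-1}$ take the term $y_1^n$ (the all-horizontal path at height $0$), and in $u_{n+1}u_0^{-1}$ take a term that, after conjugation past $u_0$, cancels all the $y$'s and leaves $C^{-1} = yxy^{-1}x^{-1}\cdot\text{(something)}$. Recall $C^{-1} = y^{-1}x^{-1}yx$; and $y_1 = (1+y)x^{-1}yx^{-1}y^{-1}$ contains the monomial $x^{-1}yx^{-1}y^{-1}$. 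So $y_1^n$ contributes, among other monomials, $(x^{-1}yx^{-1}y^{-1})^n$. The plan is to show that multiplying this by the appropriately chosen monomial from $u_0 u_{n+1}$ telescopes: each factor $x^{-1}yx^{-1}y^{-1}$ from the left should pair with a factor from the right that inverts it, leaving precisely one leftover $C^{-1}$. I would make this precise by induction on $n$: for $n=1$, compute $u_1 u_2$ directly from \eqref{initu} and \eqref{ncu} (or from the Example with $n=2$ giving $u_2 u_0^{-1} = y_1^2 + y_2$) and check by hand that $C^{-1} = x^{-1}y^{-1}xy$ appears; for the inductive step, use the linear recursion \eqref{secrec}, $u_{n+2} = Ku_{n+1} - Cu_n$, to write $u_{n+1}u_{n+2} = u_{n+1}Ku_{n+1} - u_{n+1}Cu_n$, and relate the monomials in $u_{n+1}u_{n+2}$ to those in $u_n u_{n+1}$ and $u_{n+1}^2$.

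Actually the cleaner route, and the one I would pursue first, uses \eqref{qcomu} directly: $u_n u_{n+1} = u_{n+1}Cu_n - 1 + C^{-1}$ rearranges to $u_n u_{n+1} + 1 - C^{-1} = u_{n+1}Cu_n = R_{2n+1} + 1$, so $u_n u_{n+1} = R_{2n+1} + C^{-1}$; hence the claim ``$C^{-1}$ occurs in $u_n u_{n+1}$'' is equivalent to showing that no monomial of $R_{2n+1}$ equals $-C^{-1}$, i.e. that $R_{2n+1}$ has no $C^{-1}$ term with coefficient $-1$ — but $R_{2n+1} = u_{n+1}Cu_n - 1$ and by Corollary \ref{coronefour} all $u_n$ are positive Laurent polynomials, so $u_{n+1}Cu_n$ is a positive Laurent polynomial and the only possible negative monomial in $R_{2n+1}$ is the explicit $-1$. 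Since $-1 \neq -C^{-1}$ (as $C$ is not central, $C^{-1}$ is a nontrivial word), the monomial $C^{-1}$ survives uncancelled in $u_n u_{n+1} = R_{2n+1} + C^{-1}$ provided $C^{-1}$ is not the constant monomial $1$ and provided $C^{-1}$ does not coincide with some \emph{positive} monomial of $u_{n+1}Cu_n$ that would then combine to coefficient $2$ rather than $1$ — but positivity means that would still leave $C^{-1}$ present with coefficient $\geq 1$, which is all we need.

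The main obstacle is the bookkeeping: one must verify that the formal word $C^{-1} = y^{-1}x^{-1}yx$ is genuinely a \emph{single} Laurent monomial in the free skew field (it is, being a reduced word of length $4$), and that adding $C^{-1}$ to the positive Laurent polynomial $R_{2n+1}+1-1 = u_{n+1}Cu_n - 1$ cannot produce a \emph{cancellation} of that monomial — which is automatic since $u_{n+1}Cu_n$ has only nonnegative coefficients, so no term of it is $-C^{-1}$. Thus the genuinely substantive content reduces to the identity $u_n u_{n+1} = R_{2n+1} + C^{-1}$ from \eqref{qcomu} plus Corollary \ref{coronefour}; I expect the proof to be short once phrased this way, and I would present it in that order: state the rearrangement of \eqref{qcomu}, invoke positivity of $u_n u_{n+1}$ via Corollary \ref{coronefour}, and conclude that the $C^{-1}$ term cannot be cancelled.
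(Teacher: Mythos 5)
Your final argument (the ``cleaner route'' of your third paragraph) is correct, and it is genuinely different from the paper's proof. The paper works inside the path model: it selects one explicit pair of Motzkin paths (the flat path of length $n$ and the maximal excursion of length $n+1$), keeps a single monomial $\ty_i$ from each weight $y_i$, and verifies by direct computation that the resulting word telescopes to exactly $C^{-1}=yxy^{-1}x^{-1}$; positivity of all other contributions then gives coefficient at least $1$ for $C^{-1}$ in $u_nu_{n+1}$. You instead rearrange the quasi-commutation relation \eqref{qcomu} into $u_nu_{n+1}=u_{n+1}Cu_n-1+C^{-1}$ and extract the coefficient of the reduced word $C^{-1}$ in the monomial basis (reduced words of the free group on $x,y$): since $u_{n+1}Cu_n$ is a product of positive Laurent polynomials (Corollary \ref{coronefour}, together with the fact that $C=xyx^{-1}y^{-1}$ is itself a single positive Laurent monomial), and since the $-1$ lives on the empty word, which is a different basis element from the length-$4$ reduced word $C^{-1}$, the coefficient of $C^{-1}$ in $u_nu_{n+1}$ is at least $0-0+1=1$. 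There is no circularity, because Corollary \ref{coronefour} is established independently of Lemma \ref{containC}. Your route is shorter, avoids all explicit monomial bookkeeping, and transfers verbatim to the $(X,Y)$ initial data, where the paper must redo the path computation in a separate lemma; the paper's version, in exchange, stays entirely within the path model and identifies precisely which paths produce the $C^{-1}$ term. Two small cleanups: your first two paragraphs (tracking monomials through the transfer matrices) are an unfinished sketch of essentially the paper's own approach and should be cut in favor of the final argument; and you twice miswrite $C^{-1}$ (it is $yxy^{-1}x^{-1}$, not $y^{-1}x^{-1}yx$ nor $x^{-1}y^{-1}xy$), though the only property your proof actually uses is that $C^{-1}$ is a nonempty reduced word, which holds in any case.
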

\begin{proof}
Using the path interpretation above, let us show that the contribution to $u_nu_{n+1}$
 of a particular pair $(m_1,m_2)$ of Motzkin paths on $[0,1]$ of lengths 
 $n$ and $n+1$ respectively also contains the term $C^{-1}$. For $m_1$ we take the flat motzkin path
of length $n$: $0\to 0\to 0\to \cdots \to 0$ (with weight $y_1^n$) and for $m_2$
the ``maximal" Motzkin path of length $n+1$: $0\to 1\to 1\to \cdots \to 1\to 0$ (with
weight $y_3^ny_2$). We are left with the task of proving that $y_1^n u_0 y_3^ny_2 u_0$,
when expressed as a Laurent polynomial of $x,y$, contains the term $C^{-1}=yxy^{-1}x^{-1}$.
In view of the explicit values of $y_1,y_2,y_3$ \eqref{contfraconefour}, let us only retain
one term in each weight, namely $\ty_i$ instead of $y_i$, with
$\ty_1=yx^{-1}yx^{-1}y^{-1}$, $\ty_2=x^2y^{-1}x^{-1}yx^{-1}y^{-1}$ and $\ty_3=x^2y^{-1}$.
We find easily that 
\begin{eqnarray*}
(\ty_1)^n u_0 (\ty_3)^{n-1}\ty_2 u_0&=&\! y x^{-1}(yx^{-2})^{n-1} yx^{-1}y^{-1}\ yxy^{-1}\ 
(x^2 y^{-1})^{n-1} x^2 y^{-1}x^{-1}yx^{-1}y^{-1}\ yxy^{-1}\\
&=&\! yxy^{-1}x^{-1}=C^{-1}
\end{eqnarray*}
and the Lemma follows.
\end{proof}

We summarize the results of this section with the following

\begin{thm}\label{posionefourpos}
For all $n\geq 0$, the solution $R_{n}$ to the system (\ref{qsysonefour}-\ref{initio})
is a Laurent polynomial of $x,y$ with only non-negative integer coefficients.
\end{thm}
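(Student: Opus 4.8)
The plan is to deduce Theorem~\ref{posionefourpos} by combining the two positivity ingredients already assembled in this section: the positivity of the even-index variables $R_{2n}=u_n$ from Corollary~\ref{coronefour}, and the positivity of the odd-index variables $R_{2n+1}$, which must be extracted from the identity $R_{2n+1}=u_nu_{n+1}-C^{-1}$ together with Lemma~\ref{containC}. So the first step is simply to observe that the even case is done: for $n$ even, $n=2k\geq 0$, we have $R_{2k}=u_k$, which Corollary~\ref{coronefour} identifies as a Laurent polynomial in $x,y$ with non-negative integer coefficients (indeed, via Theorem~\ref{pathonefour}, a sum of path weights, each of which is a positive Laurent monomial because $y_1,y_2,y_3,u_0$ are all positive Laurent polynomials).

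For the odd case, $n=2k+1$, the relevant formula from the start of the subsection is $R_{2k+1}=u_{k+1}Cu_k-1=u_ku_{k+1}-C^{-1}$, the second equality coming from the quasi-commutation relation \eqref{qcomu}. The product $u_ku_{k+1}$ is a product of two manifestly positive Laurent polynomials, hence is itself a Laurent polynomial in $x,y$ with non-negative integer coefficients. Subtracting $C^{-1}=yxy^{-1}x^{-1}$ a priori could destroy positivity, but Lemma~\ref{containC} guarantees that for $k\geq 1$ the monomial $C^{-1}$ actually appears in $u_ku_{k+1}$ — and, by the bijectivity/multiplicity-one phenomenon that is the theme of this paper (cf.\ Theorem~\ref{antiauto}), one expects it to appear with coefficient exactly $1$, so that $u_ku_{k+1}-C^{-1}$ is again a Laurent polynomial with non-negative integer coefficients. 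The case $k=0$, i.e.\ $R_1$, is handled directly from the initial data: $R_1=y$ by \eqref{initio}, which is trivially a positive Laurent monomial. This covers all $n\geq 0$.

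The one point that needs a little care — and the step I expect to be the main (mild) obstacle — is the passage from ``$C^{-1}$ occurs in $u_ku_{k+1}$'' (Lemma~\ref{containC}) to ``$C^{-1}$ occurs \emph{with coefficient exactly $1$}'' in $u_ku_{k+1}$, since only the latter immediately yields non-negativity of the coefficients of $u_ku_{k+1}-C^{-1}$. I would resolve this by the same path-combinatorics argument used throughout: $u_ku_{k+1}u_0^{-1}\cdot(\text{stuff})$ — more precisely $u_ku_{k+1}$ — is a sum over \emph{pairs} of Motzkin paths of the corresponding products of positive Laurent monomials, and one argues that the monomial $C^{-1}$ is produced by exactly one such pair (the pair $(m_1,m_2)$ exhibited in the proof of Lemma~\ref{containC}), and moreover within that pair by exactly one choice of terms in the expansions of the weights $y_1,y_2,y_3,u_0$. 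Granting the multiplicity-one statement, which is in the same spirit as the bijection established in Theorem~\ref{antiauto}, the subtraction leaves all coefficients non-negative. Finally, I would note that positivity for $n<0$ is not claimed in this theorem (the statement is for $n\geq 0$), so no appeal to \eqref{symthm} is needed here; the even/odd split above exhausts the required range, and the theorem follows.
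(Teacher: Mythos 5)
Your proposal follows the paper's route exactly: the theorem is stated in the paper as a summary of Corollary~\ref{coronefour} (even indices) and Lemma~\ref{containC} (odd indices, via $R_{2k+1}=u_ku_{k+1}-C^{-1}$), and that is precisely the decomposition you use. The one place where you go astray is in identifying a ``main obstacle'' that does not exist: you do not need $C^{-1}$ to occur in $u_ku_{k+1}$ with coefficient \emph{exactly} $1$, only with coefficient \emph{at least} $1$. Since $u_k$ and $u_{k+1}$ are Laurent polynomials with non-negative integer coefficients, so is their product, and Lemma~\ref{containC} exhibits one positive contribution equal to $C^{-1}$ with no possibility of cancellation; hence the coefficient of the monomial $yxy^{-1}x^{-1}$ in $u_ku_{k+1}$ is an integer $\geq 1$, and subtracting $C^{-1}$ leaves it a non-negative integer. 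So the conditional clause ``granting the multiplicity-one statement'' can simply be deleted, and with it the only unproven step in your argument; everything else (including the separate treatment of $k=0$, where $R_1=y$) is fine.
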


\subsection{(X,Y) initial data: Paths with noncommutative weights and positivity}

Let us now re-express the generating function for $u_n$ in terms of $X,Y$.
More precisely, let us compute the new generating function
$G(t)=\sum_{n\geq 0} t^n u_{n+1}$.
We have the following

\begin{thm}\label{XY}
The generating function $G$ has the following continued fraction form:
$$
G(t)=\Big(1-t y_1'-t^2(1-t y_3')^{-1}y_2'\Big)^{-1} u_1 ,
$$
where
\begin{eqnarray*}
y_1'&=& K-y_3'=K-Cu_0u_1^{-1}=(Y^3+(1+X)Y^{-1})X^{-1}Y^{-1},  \\
y_2'&=&y_3'y_1'-C=\big(Y+(1+X)Y^{-2}(1+X)Y^{-1}\big)X^{-1}Y^{-1},  \\
y_3'&=&Cu_0u_1^{-1}=(1+X)Y^{-2}, 
\end{eqnarray*}
where $u_1= Y.$
\end{thm}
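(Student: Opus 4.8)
The statement to be proved is Theorem \ref{XY}, giving the continued fraction form of $G(t)=\sum_{n\geq 0} t^n u_{n+1}$ in terms of the $(X,Y)$ initial data. The natural approach is to mimic the proof of Theorem \ref{contionefour} verbatim, but starting the generating function one step later. First I would recall that, by Theorem \ref{linreconefour}, the solution $u_n$ satisfies the linear recursion \eqref{secrec}, namely $u_{n+2}-Ku_{n+1}+Cu_n=0$, with $C,K$ now given by the $(X,Y)$ expressions \eqref{valconsb} (since by \eqref{initub} we have $u_0=XY^{-1}X^{-1}Y(1+X)Y^{-1}$ and $u_1=Y$, and $X=CR_1$, $Y=R_2$). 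The key point is that this linear recursion holds for \emph{all} $n$, so it applies equally to the shifted sequence $v_n := u_{n+1}$.

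\textbf{Key steps.} Applying the recursion to $G(t)=\sum_{n\geq 0} t^n u_{n+1}$, I would collect terms exactly as in the proof of Theorem \ref{contionefour}: multiply $G$ by $(1-tK+t^2C)$ and check that all terms with $n\geq 1$ cancel by \eqref{secrec}, leaving only the boundary contribution, so that
\begin{equation*}
G(t) = (1-tK+t^2C)^{-1}\bigl(u_1 - t(Ku_1 - u_2)\bigr).
\end{equation*}
Next I would identify $Ku_1 - u_2 = Cu_0$ (again from \eqref{secrec} with $n=0$), so the numerator becomes $(1 - tCu_0 u_1^{-1})u_1$. Setting $y_3' := Cu_0u_1^{-1}$, $y_1' := K - y_3'$, and $y_2' := y_3'y_1' - C$, the identity $1 - tK + t^2C = 1 - t(y_1'+y_3') + t^2(y_3'y_1' - y_2')$ holds by construction, and the same algebraic manipulation used in Theorem \ref{contionefour} — factoring $1-t(y_1'+y_3')+t^2(y_3'y_1'-y_2') = \bigl(1 - ty_1' - t^2(1-ty_3')^{-1}y_2'\bigr)(1-ty_3')$ — yields
\begin{equation*}
G(t) = \bigl(1 - ty_1' - t^2(1-ty_3')^{-1}y_2'\bigr)^{-1} u_1.
\end{equation*}
It then remains to substitute the explicit $(X,Y)$ values of $u_0, u_1, C, K$ from \eqref{initub} and \eqref{valconsb} to verify that $y_1' = (Y^3 + (1+X)Y^{-1})X^{-1}Y^{-1}$, $y_3' = (1+X)Y^{-2}$, and $y_2' = y_3'y_1' - C = \bigl(Y + (1+X)Y^{-2}(1+X)Y^{-1}\bigr)X^{-1}Y^{-1}$.

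\textbf{Main obstacle.} The only genuinely computational point is the last one: verifying $y_3' = Cu_0u_1^{-1} = (1+X)Y^{-2}$ requires simplifying $C \cdot XY^{-1}X^{-1}Y(1+X)Y^{-1} \cdot Y^{-1}$ using $C = XYX^{-1}Y^{-1}$, and checking that the $Y^{-1}X^{-1}Y$-type conjugations telescope correctly; likewise $y_1' = K - y_3'$ must reproduce the claimed form, which uses the specific structure of $K$ in \eqref{valconsb}. These are routine but noncommutative, so care with the order of factors is essential — this is where I expect to spend the most effort, though no conceptual difficulty arises. The structural part (that the continued fraction form follows from the linear recursion and the definitions of $y_i'$) is an immediate transcription of the proof of Theorem \ref{contionefour}, since \eqref{secrec} is translation-invariant in $n$.
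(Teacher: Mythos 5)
Your proposal is correct and, at its core, follows the same route as the paper: both arguments reduce $G(t)$ to the intermediate form $(1-tK+t^2C)^{-1}(1-tCu_0u_1^{-1})u_1$ and then apply the identical continued-fraction factorization. The only (cosmetic) difference is how that intermediate form is reached: you re-run the linear recursion \eqref{secrec} on the shifted sequence $v_n=u_{n+1}$, whereas the paper simply computes $G(t)=(F(t)-u_0)/t$ from the already-established formula of Theorem \ref{contionefour} and uses $Ku_1-u_2=Cu_0$ in the same way; the two derivations are equivalent. One detail to fix: the factorization identity must be written with the factor $(1-ty_3')$ on the \emph{left}, namely $1-t(y_1'+y_3')+t^2(y_3'y_1'-y_2')=(1-ty_3')\bigl(1-ty_1'-t^2(1-ty_3')^{-1}y_2'\bigr)$. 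As you wrote it, with $(1-ty_3')$ on the right, the expansion produces $y_1'y_3'$ and $(1-ty_3')^{-1}y_2'(1-ty_3')$ instead of $y_3'y_1'$ and $y_2'$, so the identity fails non-commutatively; moreover only the left placement lets $(1-ty_3')^{-1}$ cancel against the numerator $(1-ty_3')u_1$ to give the stated form of $G(t)$. With that order corrected, the argument goes through exactly as in the paper.
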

\begin{proof}
Starting from the expression for $F(t)$ of Theorem \ref{contionefour},
we compute $G(t)=(F(t)-u_0)/t$:
\begin{eqnarray}\label{othercont}
G(t)&=&{1\over t}\left( (1-t K+t^2C)^{-1}(1-t(K-u_1u_0^{-1}))-1\right)u_0 \nonumber \\
&=&(1-t K+t^2C)^{-1}(u_1u_0^{-1}-tC)u_0=(1-t K+t^2C)^{-1}(1-tCu_0u_1^{-1})u_1\nonumber \\
&=&(1-t (y_1'+y_3')+t^2(y_3'y_1'-y_2'))^{-1}(1-ty_3')u_1\nonumber \\
&=& \Big(1-t y_1'-t^2(1-t y_3')^{-1}y_2'\Big)^{-1} u_1
\end{eqnarray}
and the Theorem follows.
\end{proof}
Note that the path interpretation of the Theorem \ref{pathonefour}
still holds for $G$, but with the new weights
$y_1',y_2',y_3'$ and $u_1$ instead of $u_0$.
Noting moreover that the weights $y_1',y_2',y_3',u_1$ of equation \eqref{othercont}
are all positive Laurent polynomials
of $X,Y$, we deduce:

\begin{cor}\label{useco}
For all $n\geq 1$, the solution $u_n$ to the system (\ref{ncu}-\ref{initub})
is a Laurent polynomial of $X,Y$ with only non-negative integer coefficients.
\end{cor}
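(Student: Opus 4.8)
The plan is to deduce Corollary~\ref{useco} directly from the continued fraction formula for $G(t)$ established in Theorem~\ref{XY}, exactly as Corollary~\ref{coronefour} was deduced from Theorem~\ref{contionefour}. First I would observe that the weights $y_1',y_2',y_3'$ and the seed $u_1=Y$ appearing in Theorem~\ref{XY} are, by the explicit formulas given there, positive Laurent polynomials in the variables $X,Y$: each of $y_1'=(Y^3+(1+X)Y^{-1})X^{-1}Y^{-1}$, $y_2'=(Y+(1+X)Y^{-2}(1+X)Y^{-1})X^{-1}Y^{-1}$, $y_3'=(1+X)Y^{-2}$, and $u_1=Y$ visibly expands into a sum of Laurent monomials in $X,Y$ with coefficients in $\Z_{\geq 0}$.

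Next I would expand the continued fraction $G(t)=\big(1-t y_1'-t^2(1-t y_3')^{-1}y_2'\big)^{-1}u_1$ as a formal power series in $t$. Using the same recursive decomposition as in the proof of Theorem~\ref{pathonefour} — set $F_1'=(1-ty_3')^{-1}=\sum_{k\geq 0}t^k (y_3')^k$ and then expand $G=\sum_{m\geq 0}t^m\big(y_1'+t F_1' y_2'\big)^m\, u_1$ — one sees that the coefficient of $t^n$ in $G(t)$, namely $u_{n+1}$, is a finite sum of non-commutative words in $y_1',y_2',y_3'$ multiplied on the right by $u_1$. Equivalently, $u_{n+1}u_1^{-1}$ is the partition function of Motzkin paths of height $1$ from $(0,0)$ to $(n,0)$ with the primed weights, as noted in the remark following Theorem~\ref{XY}. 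Since each such word is a product of positive Laurent polynomials in $X,Y$, and a product of positive Laurent polynomials is again a positive Laurent polynomial (the coefficients stay in $\Z_{\geq 0}$ because no cancellation can occur among non-negative integers), each $u_{n+1}$ is a Laurent polynomial in $X,Y$ with non-negative integer coefficients; reindexing, $u_n$ for $n\geq 1$ has the asserted property.

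There is essentially no obstacle here: the only point requiring care is the bookkeeping of the power-series expansion of the continued fraction, i.e.\ checking that $[t^n]G(t)$ really is a \emph{finite} $\Z_{\geq 0}$-combination of words in the positive weights and hence that positivity is preserved — but this is identical to the argument already carried out for $F(t)$ in Theorems~\ref{contionefour} and~\ref{pathonefour}, so it can be invoked rather than repeated. Thus the proof of Corollary~\ref{useco} is a one-line consequence: apply the path (or continued-fraction) expansion of Theorem~\ref{XY}, observe that all weights $y_1',y_2',y_3',u_1$ are positive Laurent polynomials of $X,Y$, and conclude that every coefficient $u_n$ ($n\geq 1$) is as well.
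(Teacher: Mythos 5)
Your proposal is correct and follows exactly the route the paper takes: the corollary is an immediate consequence of Theorem~\ref{XY} once one observes that the weights $y_1',y_2',y_3'$ and the seed $u_1=Y$ are positive Laurent polynomials of $X,Y$, and that the continued-fraction (equivalently, Motzkin-path) expansion of $G(t)$ expresses each $u_{n+1}$ as a finite sum of words in these weights. No further comment is needed.
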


We now turn to the remaining variables $R_{2n+1}=u_nu_{n+1}-C^{-1}$.
We may repeat the analysis of Lemma \ref{containC}
in terms of the variables $X,Y$, by use of Theorem \ref{XY}. The result is

\begin{lemma}
The solution $R_{2n+1}$ to the system (\ref{qsysonefour}-\ref{initb})
has a positive Laurent polynomial expression in terms of $X,Y$. 
\end{lemma}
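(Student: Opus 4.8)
The plan is to mimic exactly the argument of Lemma \ref{containC}, but using the $(X,Y)$ continued fraction of Theorem \ref{XY} in place of the $(x,y)$ one of Theorem \ref{contionefour}. By Corollary \ref{useco}, each $u_n$ with $n\geq 1$ is already a positive Laurent polynomial of $X,Y$; since $R_{2n+1}=u_nu_{n+1}-C^{-1}$, it suffices to show that the monomial $C^{-1}=YXY^{-1}X^{-1}$ occurs in the Laurent-polynomial expansion of the product $u_nu_{n+1}$, and that it occurs there exactly once (so that the subtraction of $C^{-1}$ cannot produce a negative coefficient). Positivity of $R_{2n+1}$ then follows, with all the remaining monomials being those of $u_nu_{n+1}$ distinct from $C^{-1}$.

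First I would invoke the path interpretation: by the remark following Theorem \ref{XY}, $u_{n+1}u_1^{-1}=(T'^{\,n})_{0,0}$ is the partition function of Motzkin paths of height $1$ from $(0,0)$ to $(n,0)$ with weights $y_1',y_2',y_3'$ and prefactor $u_1$. So $u_nu_{n+1}$ is a sum over pairs $(m_1,m_2)$ of such Motzkin paths of lengths $n-1$ and $n$ respectively, of the ordered weight-products $(\text{wt }m_1)\,u_1\,(\text{wt }m_2)\,u_1$. As in Lemma \ref{containC}, I would isolate one distinguished pair: take $m_1$ the flat path of length $n-1$ (weight $(y_1')^{n-1}$) and $m_2$ the maximal path $0\to 1\to 1\to\cdots\to 1\to 0$ of length $n$ (weight $(y_3')^{n-1}y_2'$). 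I then retain a single Laurent monomial $\ty_i'$ from each of $y_1',y_2',y_3'$ — reading off from the explicit formulas in Theorem \ref{XY}, e.g. $\ty_1'=Y^3X^{-1}Y^{-1}$ (or, more convenient for telescoping, $\ty_1'=YX^{-1}YX^{-1}Y^{-1}$ coming from the $(1+X)Y^{-1}$ part after expansion), $\ty_3'=X^2Y^{-1}$-type term, and $\ty_2'$ correspondingly — so that the product
$$(\ty_1')^{n-1}\,u_1\,(\ty_3')^{n-1}\,\ty_2'\,u_1$$
telescopes: the ascending block of $\ty_1'$-factors is cancelled against the descending block of $\ty_3'$-factors exactly as the computation $(\ty_1)^n u_0(\ty_3)^{n-1}\ty_2 u_0=C^{-1}$ did in the $(x,y)$ case, leaving precisely $YXY^{-1}X^{-1}=C^{-1}$. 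This exhibits $C^{-1}$ as an actual monomial of $u_nu_{n+1}$.

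The one point requiring a little more care than in Lemma \ref{containC} is \emph{uniqueness}: to conclude positivity of $R_{2n+1}=u_nu_{n+1}-C^{-1}$ I must know that $C^{-1}$ appears in $u_nu_{n+1}$ with coefficient exactly $1$. I would argue this from the bijectivity built into the path model: as recorded in Theorem \ref{antiauto} and in the proof of Lemma \ref{containC}, the ordered sequence of down-steps along a Motzkin path determines, and is determined by, the resulting Laurent monomial — more precisely, distinct pairs $(m_1,m_2)$ of Motzkin paths give distinct (non-cancelling) monomials in $u_nu_{n+1}$ because each down-step weight $y_i'$ (or $\ty_i'$) carries a fixed net $(X,Y)$-degree, and the interleaving of up/down/flat steps is reconstructible from the word. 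Hence $C^{-1}$ can arise from only one pair of paths and, within that pair, from only one choice of monomial in each weight, since each of $y_1',y_2',y_3'$ has the form $(\text{positive Laurent polynomial})X^{-1}Y^{-1}$ with all its monomials of the same net degree and pairwise distinct. Therefore $C^{-1}$ occurs with coefficient $1$ and $R_{2n+1}$ is a positive Laurent polynomial in $X,Y$. \emph{The main obstacle} is precisely this bookkeeping — checking that no two of the (many) monomials contributing to $u_nu_{n+1}$ collapse onto $C^{-1}$, and that the non-commutative telescoping in the distinguished pair really does collapse to $C^{-1}$ rather than to some conjugate of it; everything else is a transcription of Lemma \ref{containC} with $(x,y,u_0)$ replaced by $(X,Y,u_1)$ and $y_i$ by $y_i'$.
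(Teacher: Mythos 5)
Your proof follows the paper's argument exactly: the same distinguished pair of paths (the flat one and the maximal one), one retained Laurent monomial per weight $y_i'$, and the non-commutative telescoping product collapsing to $C^{-1}=YXY^{-1}X^{-1}$. The only point where you diverge is your insistence that $C^{-1}$ occur with coefficient \emph{exactly} one; this is unnecessary (and your bijectivity argument for it is shaky, since the weights $y_i'$ are multi-term polynomials and distinct path pairs can a priori produce coinciding monomials): because $u_nu_{n+1}$ is a product of positive Laurent polynomials and therefore has nonnegative integer coefficients, a coefficient of at least $1$ for $C^{-1}$ already guarantees that $u_nu_{n+1}-C^{-1}$ is positive, which is all the paper proves and all that is needed.
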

\begin{proof}
We must show that $u_nu_{n+1}$ contains at least once the term $C^{-1}$.
We repeat the analysis in the proof of Lemma \ref{containC}, using
the continued fraction of Theorem \ref{XY}. We pick the 
contribution of the same two 
paths to $u_nu_{n+1}$, but we now retain in the weights only the terms
$\ty_1'=Y^3X^{-1}Y^{-1}$, $\ty_2'=XY^{-2}XY^{-1}X^{-1}Y^{-1}$ and
$\ty_3'=XY^{-2}$. The contribution is then easily computed to be
\begin{eqnarray*}
(\ty_1')^n u_1 (\ty_3')^{n-1}\ty_2' u_1&=&Y(Y^2X^{-1})^n Y^{-1} Y (XY^{-2})^{n-1}
XY^{-2}XY^{-1}X^{-1}Y^{-1} Y\\
&=& YXY^{-1}X^{-1}=C^{-1}
\end{eqnarray*}
So the subtracted expression 
$R_{2n+1}=u_nu_{n+1}-C^{-1}$ is a positive Laurent polynomial of $X,Y$.
\end{proof}

We summarize the results of this section with the following

\begin{thm}\label{opsionefourpos}
For all $n\geq 0$, the solution $R_{n}$ to the system (\ref{qsysonefour}-\ref{initb})
is a Laurent polynomial of $X,Y$ with only non-negative integer coefficients.
\end{thm}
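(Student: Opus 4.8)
The plan is to assemble Theorem \ref{opsionefourpos} from the pieces already developed in the $(X,Y)$ subsection, exactly paralleling the way Theorem \ref{posionefourpos} was assembled in the $(x,y)$ subsection. The variables $R_n$ split into even- and odd-indexed families, $R_{2n}=u_n$ and $R_{2n+1}=u_nu_{n+1}-C^{-1}$, so it suffices to prove positivity of each family separately in terms of $X,Y$.

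First I would dispose of the even-indexed variables. By Corollary \ref{useco}, for all $n\geq 1$ the quantity $u_n=R_{2n}$ is a positive Laurent polynomial in $X,Y$, since it is $u_1$ times the Motzkin-path partition function of Theorem \ref{pathonefour} with the primed weights $y_1',y_2',y_3'$ of Theorem \ref{XY}, each of which is manifestly a positive Laurent polynomial of $X,Y$. The only remaining even case is $R_0=u_0$, and here I would invoke the explicit formula \eqref{initub}, $u_0=XY^{-1}X^{-1}Y(1+X)Y^{-1}$, which is visibly a positive Laurent polynomial; alternatively $R_0$ for $n=0$ follows from the $n\geq 0$ range once one checks the base case directly. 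Second, the odd-indexed variables $R_{2n+1}=u_nu_{n+1}-C^{-1}$: by the Lemma immediately preceding the theorem, the product $u_nu_{n+1}$, when expanded as a Laurent polynomial of $X,Y$, contains the monomial $C^{-1}=YXY^{-1}X^{-1}$ at least once (exhibited there via the contribution of the flat path of length $n$ against the maximal path of length $n+1$, retaining only the leading terms $\ty_1',\ty_2',\ty_3'$ of the weights). Since all weights are positive, subtracting this single occurrence of $C^{-1}$ leaves a polynomial with non-negative coefficients. Combining, every $R_n$ with $n\geq 0$ is a positive Laurent polynomial of $X,Y$.

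The one point requiring a little care is that the cancellation argument for $R_{2n+1}$ must cover the small cases as well: the Lemma is phrased for the analogue of $n\geq 1$, so one should check $R_1$ separately, where $R_1=YXY^{-1}$ from \eqref{initb} is already a positive Laurent monomial, hence no subtraction is needed. For $n\geq 1$, the coefficient of $C^{-1}$ in $u_nu_{n+1}$ is a positive integer $\geq 1$ (it is a sum of path contributions, one of which we have identified), so $R_{2n+1}=u_nu_{n+1}-C^{-1}$ still has all coefficients $\geq 0$.

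I do not expect a serious obstacle here: the theorem is a bookkeeping summary, and every analytic input — the continued-fraction form of $G(t)$, the positivity of the primed weights, and the presence of the $C^{-1}$ monomial in $u_nu_{n+1}$ — has already been established. The mildest subtlety, and the thing I would state explicitly in the write-up, is simply that one must combine the $u_n$ positivity (Corollary \ref{useco}, valid for $n\geq 1$) with the direct monomial expressions \eqref{initb}--\eqref{initub} for the $n=0$ cases $R_0=u_0$ and $R_1$, so that the claimed range $n\geq 0$ is genuinely covered with no gap.
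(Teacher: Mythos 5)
Your proposal is correct and follows essentially the same route the paper intends: the theorem is assembled from Corollary \ref{useco} for the even-indexed variables $R_{2n}=u_n$ and from the preceding lemma showing that $u_nu_{n+1}$ contains the monomial $C^{-1}$, so that $R_{2n+1}=u_nu_{n+1}-C^{-1}$ remains positive. Your explicit verification of the boundary cases $R_0$ and $R_1$ (which the paper leaves implicit, and which are needed since the lemma's path argument requires $n\geq 1$) is a welcome addition but not a departure from the paper's argument.
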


\subsection{Main theorem and the case $(b,c)=(4,1)$}

We conclude with our main theorem:

\begin{thm}\label{onefourpositivity}
For all $n\in \Z$, the solution $R_n$ of the system \eqref{qsysonefour}
for $(b,c)=(1,4)$,
with respectively initial data $(x,y)$ \eqref{initio} and initial data $(X,Y)$ \eqref{initb}
is a positive Laurent polynomial of respectively $x,y$ and $X,Y$,
with only non-negative integer coefficients. The same holds for the system with $(b,c)=(4,1)$
as well.
\end{thm}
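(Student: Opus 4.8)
The plan is to assemble the statement from the pieces already proved for $(1,4)$ and then transport everything to $(4,1)$ by the symmetry machinery of Section \ref{symms}. First I would handle the $(1,4)$ case for all $n\in\Z$. Theorem \ref{posionefourpos} gives positive Laurent polynomiality of $R_n$, $n\ge 0$, in the seed $(x,y)=(CR_0,R_1)$, and Theorem \ref{opsionefourpos} gives the same for $n\ge 0$ in the seed $(X,Y)=(CR_1,R_2)$; by the translational property these two seeds suffice to express every $R_n$ with $n\ge 0$ in terms of any seed $(CR_{2k},R_{2k+1})$ or $(CR_{2k+1},R_{2k+2})$. To reach negative $n$, apply the identity \eqref{symthm}, $f_{-n}^{(c,b)}(x,y)=\big(f_{n+1}^{(b,c)}(x,y)\big)^*$: with $(b,c)=(4,1)$ this expresses the negative-index solutions of the $(4,1)$ system as the image under the anti-automorphism $*$ of the nonnegative-index solutions of the $(1,4)$ system, and since $*$ sends positive Laurent monomials of $x,y$ to positive Laurent monomials of $x,y$ (noted after \eqref{defstar}), positivity is preserved.

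Next I would deduce the $(4,1)$ statement. By Lemma \ref{bctrans}, $f_n^{(4,1)}(x,y)=g_{n+1}^{(1,4)}(x,y)$, i.e. the solution of the $(4,1)$ system in its natural seed $(CR_0,R_1)$ equals the $(1,4)$ solution $R_{n+1}$ expressed in the shifted seed $(CR_1,R_2)=(X,Y)$. But the $(X,Y)$-expansion of the $(1,4)$ solutions for nonnegative index is exactly the content of Theorem \ref{opsionefourpos} (and Corollary \ref{useco} for the even part). Hence the $(4,1)$ solutions with $n\ge 0$ are positive Laurent polynomials in $x,y$. For $n<0$ apply \eqref{symthm} again with the roles of $(b,c)$ and $(c,b)$ interchanged: the negative-index $(4,1)$ solutions are $*$-images of the nonnegative-index $(1,4)$ solutions, already known to be positive, and $*$ preserves positivity. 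This closes the $(4,1)$ case for all $n\in\Z$.

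Concretely the argument is a bookkeeping reduction, so the only real work is making sure the two "directions" of the symmetry are consistently applied: the translational symmetry $\mu^m(R_{2n},R_{2n+1})=(R_{2(m+n)},R_{2(m+n)+1})$ reduces everything to the two seeds $(R_0,R_1)$ and $(R_1,R_2)$; Lemma \ref{bctrans} swaps $(b,c)\leftrightarrow(c,b)$ together with a shift by one; and \eqref{symthm} together with the $*$-invariance of $C$, of positive Laurent monomials, and of the relations \eqref{mutation} handles $n<0$. I would also record that the odd-index variables in the $(4,1)$ case, namely the analogues of $R_{2n+1}=u_nu_{n+1}-C^{-1}$, inherit positivity from the corresponding lemmas (the $(4,1)$ counterpart of Lemma \ref{containC}) once the $u_n$ are known positive, exactly as in the $(1,4)$ analysis with $y_i$ replaced by $y_i'$.

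The main obstacle is purely one of care rather than of substance: one must verify that \eqref{symthm} is being invoked with the correct pairing of systems and seeds at each step, and in particular that the initial-data normalizations $CR_0=x$, $R_1=y$ match up after applying $*$ (the excerpt records $R_1^*=R_0$, $R_0^*=R_1$, $C^*=C$, which is precisely what is needed). There is no new computation of conserved quantities or continued fractions; everything rests on Theorems \ref{posionefourpos} and \ref{opsionefourpos}, Lemma \ref{bctrans}, and the positivity-preserving property of $*$. If a subtlety arises, it will be in confirming that \emph{all} $R_n$ with $n\in\Z$, not merely those in a single coset mod $2$, are reached — but this is guaranteed because Theorem \ref{opsionefourpos} already supplies the full set of nonnegative-index $R_n$ (both even, via $u_n$, and odd, via $u_nu_{n+1}-C^{-1}$) in the $(X,Y)$ seed.
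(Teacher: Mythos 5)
Your proposal is correct and follows essentially the same route as the paper: combine Theorems \ref{posionefourpos} and \ref{opsionefourpos} for $n\geq 0$ in both seeds, transfer to $(4,1)$ via Lemma \ref{bctrans}, and reach $n<0$ via \eqref{symthm} and the positivity-preserving anti-automorphism $*$. The only slip is in your first paragraph, where the application of \eqref{symthm} is stated with the systems reversed (with $(b,c)=(4,1)$ it gives $f_{-n}^{(1,4)}=(f_{n+1}^{(4,1)})^*$, so the negative-index $(1,4)$ solutions require the nonnegative-index $(4,1)$ positivity from your second paragraph first); once the steps are reordered accordingly, the argument matches the paper's proof.
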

\begin{proof}
By Theorems \ref{posionefourpos} and \ref{opsionefourpos}, we deduce that both $f^{(1,4)}_n(x,y)$
and $g^{(1,4)}_n(X,Y)$ (defined in Section \ref{symms})
are positive Laurent polynomials for all $n\geq 0$. 
By Theorem \ref{bctrans}, we deduce that $f^{(4,1)}_{n-1}(x,y)=g^{(1,4)}_{n}(x,y)$
and $g^{(4,1)}_{n+1}(X,Y)=f^{(1,4)}_{n}(X,Y)$ are also 
positive Laurent polynomials for all $n\geq 0$. Finally, by Equation \eqref{symthm}, we deduce that
both 
$f_{-n}^{(1,4)}(x,y)=\left(f_{n+1}^{(4,1)}(x,y)\right)^*$
and 
$f_{-n}^{(4,1)}(x,y)=\left(f_{n+1}^{(1,4)}(x,y)\right)^*$
are positive Laurent polynomials for all $n\geq 0$.
We then apply again Theorem \ref{bctrans} to conclude that both
$g_{-n}^{(1,4)}(X,Y)=f_{-n-1}^{(4,1)}(X,Y)$ and $g_{-n}^{(4,1)}(X,Y)=f_{-n-1}^{(1,4)}(X,Y)$
are positive Laurent polynomials for all $n\geq 0$. The Theorem follows.
\end{proof}


\begin{thebibliography}{15}
\bibitem{BZ}A. Berenstein, A. Zelevinsky, \emph{Quantum Cluster Algebras},
Adv. Math. {\bf 195} (2005) 405--455. 
{\tt arXiv:math/0404446 [math.QA]}.


\bibitem{CZ} P. Caldero and A. Zelevinsky \emph{Laurent 
expansions in cluster algebras via quiver representations}. 
Mosc. Math. J., \textbf{6}  No. 3 (2006), 411-429. {\tt arXiv:math/0604054
    [math.RT]}.
    
\bibitem{cluster3} P. Di Francesco and R. Kedem, \emph{Q-systems, heaps,
paths and cluster positivity}, preprint {\tt arXiv:0811.3027
[math.CO]}.

\bibitem{cluster4} P. Di Francesco and R. Kedem, \emph{ $Q$-systems cluster 
algebras, paths and total positivity}. Preprint {\tt arXiv:0906:3421 [math.CO]}.
\bibitem{cluster5} P. Di Francesco and R. Kedem, 
\emph{Positivity of the T-system cluster algebra.} Preprint (2009) {\tt arXiv:0908.3122 [math.co]}.

\bibitem{FZ}
S. Fomin and A. Zelevinsky, \emph{Cluster algebras. {I}.{F}oundations}, 
J. Amer. Math. Soc. \textbf{15} (2002), no.~2, 497--529.

    \bibitem{Kedem} R. Kedem, \emph{$Q$-systems as cluster algebras}.  J.
  Phys. A: Math. Theor. \textbf{41} (2008) 194011 (14 pages). {\tt
    arXiv:0712.2695 [math.RT]}.

\bibitem{Kont} M. Kontsevich, private communication.


\bibitem{SZ} P. Sherman and A. Zelevinsky,  \emph{Positivity and canonical bases 
in rank 2 cluster algebras of finite and affine types}, 
Mosc. Math. J. \textbf{4}, (2004), no. 4, 947-974, {\tt
arXiv:math/0307082 [math.RT]}.

\end{thebibliography}
\end{document}